\newtheorem{thm}{Theorem}[section]
\newtheorem{cor}[thm]{Corollary}
\newtheorem{lem}[thm]{Lemma}
\theoremstyle{definition}
\theoremstyle{remark}
\numberwithin{thm}{section}
\DeclareMathOperator{\RE}{Re}
\newcommand{\R}{{\mathord{\mathbb R}}}
\newcommand{\N}{{\mathord{\mathbb N}}}
\newcommand{\C}{{\mathord{\mathbb C}}}
\newcommand{\Z}{{\mathord{\mathbb Z}}}
\newcommand{\E}{{\mathord{\mathbb E}}}
\def\idty{{\mathchoice {\mathrm{1\mskip-4mu l}} {\mathrm{1\mskip-4mu l}} %
{\mathrm{1\mskip-4.5mu l}} {\mathrm{1\mskip-5mu l}}}}
\DeclareMathOperator{\Tr}{Tr}
\DeclareMathOperator{\supp}{supp}
\definecolor{applegreen}{rgb}{0.55, 0.71, 0.0}
\begin{document}

\title[]{Correlations in disordered quantum harmonic oscillator systems: The effects of excitations and quantum quenches}

\author[H. Abdul-Rahman]{Houssam Abdul-Rahman}
\address{Department of Mathematics\\
University of Arizona\\
Tucson, AZ 85721, USA}
\email{houssam@math.arizona.edu}
\author[R. Sims]{Robert Sims}
\address{Department of Mathematics\\
University of Arizona\\
Tucson, AZ 85721, USA}
\email{rsims@math.arizona.edu}
\author[G. Stolz]{G\"unter Stolz}
\address{Department of Mathematics\\
University of Alabama at Birmingham\\
Birmingham, AL 35294 USA}
\email{stolz@uab.edu}
\date{}

\begin{abstract}
We prove spatial decay estimates on disorder-averaged position-momentum correlations in
a gapless class of random oscillator models. First, we prove a decay estimate on
dynamic correlations for general eigenstates with a bound that depends on the
magnitude of the maximally excited mode. Then, we consider the situation of a quantum quench.
We prove that the full time-evolution of an initially chosen (uncorrelated) product state
has disorder-averaged correlations which decay exponentially in space, uniformly in time.

\end{abstract}

\maketitle

%
%

\allowdisplaybreaks

\section{Introduction}

The mathematical investigation of disorder effects on quantum many-body systems, including, in particular, the phenomenon of many-body localization (MBL), is still in the early stages of its development. It has recently received strong attention in the physics and quantum information theory literature, see, e.g., \cite{NH,Vosketal,Agarwaletal} for recent reviews with extensive lists of references. Most mathematical results have been for models whose study can be fully reduced to the investigation of an effective one-particle Hamiltonian (i.e., without interaction) such as the Anderson model. Only few results go beyond such models.  This includes \cite{Imbrie}, which proposes a derivation of MBL for certain disordered quantum spin chains, based on an as yet unproven assumption on level statistics for these models. Also, \cite{Mastro} establishes exponential decay of ground state correlations for the Aubry-Andr\'e model (an XXZ chain in quasi-periodic field, which maps via the Jordan-Wigner transform to interacting Fermions). Recently, fully rigorous proofs of MBL in the droplet spectrum of the XXZ chain in random field (a regime extending beyond the ground state) have been given in \cite{BW} and \cite{EKS}. In particular, the latter establishes exponential clustering of all eigenstates throughout the droplet spectrum with respect to arbitrary local observables.

Models which can be fully reduced to an effective one-particle Hamiltonian include the XY spin chain in random transversal field, see the review \cite{ANSS}, and the Tonks-Girardeau gas subject to a random potential \cite{SW}. Here we present some new results on localization properties for another such model, disordered harmonic oscillator systems, as previously studied in \cite{NSS,NSS2, AR18}. In \cite{NSS} results on the many-body dynamics in the form of zero-velocity Lieb-Robinson bounds as well as exponential decay of dynamical correlations (exponential clustering) of the ground state and of thermal states of such systems were shown. \cite{NSS2} further investigated ground and thermal states by establishing an area law for their bipartite entanglement entropy. More recently in \cite{AR18}, 
area laws are proven for uniform ensembles of equal-excitation energy eigenstates where the surface area bound increases linearly in the total number of excitations (modes).

There is, of course, a long history of interesting results for deterministic oscillator models.
It is well-known, see for example \cite{CE, SCW} and references therein, that ground states of uniformly gapped oscillator models
satisfy exponential decay of correlations. Moreover, area laws for both ground and thermal states of gapped oscillator models
can be found e.g.\ in \cite{CEPD}, see also the well-referenced review \cite{ECP}. By contrast, as in \cite{NSS,NSS2, AR18}, we will consider models where
the gap above the ground state energy vanishes in the thermodynamic limit.
For the results we will prove, estimates like the above mentioned
deterministic results will not suffice, and we instead exploit disorder
effects.

Our first new result here, Theorem~\ref{thm:Corr:pq} below, shows that exponential clustering in disordered oscillator systems also holds for the dynamic position-momentum correlations of excited states. The bound obtained will only depend on the maximal local excitation number of these states, i.e., when expressed in terms of the corresponding free Boson system, for states with positive particle number density. This is desirable to show that the model is in the many-body localized phase, as the latter, if it exists for a given model, should cover an extended part of the energy spectrum of the system. We also point out that, as opposed to ground and thermal states, the excited states of oscillator systems are not quasi-free, a property used in most of the works on exactly solvable models mentioned above. While the creation of excited states out of the ground state of an oscillator system is a simple algebraic process, our result can still be seen as a simple example of the possibility to go beyond quasi-free states in the study of disordered many-body systems.

In our second result, Theorem~\ref{thm:quenched}, we study quenched position-momentum correlations of disordered oscillator systems. Quantum quenches and their effect on the non-equilibrium dynamics of quantum many-body systems have been frequently considered in physics, see, e.g., \cite{PAKSAM11} for a survey with many related references, as well as \cite{SPA14} for a discussion of quantum quenches in the context of many-body localization.

In its simplest form, a quantum quench refers to the investigation
of a quantum state which is initially prepared with respect to
one Hamiltonian and then time-evolved with respect to another. 
A common scenario is as follows. Consider a system defined
on a Hilbert space for which there is a natural 
bipartite decomposition into two subsystems, i.e. 
$\mathcal{H} = \mathcal{H}_1 \otimes \mathcal{H}_2$.
Denote by $H=H_1\otimes\idty+\idty\otimes H_2+I$ the Hamiltonian for the full system where
$I$ represents the interaction between the two subsystems. 
As an initial state take a product $\varphi= \varphi_1 \otimes \varphi_2$ which is uncorrelated with respect to the bipartite decomposition. 
The time-evolution $\varphi_t = e^{-iHt} \varphi$ of this initial state $\varphi$ under the {\it full} Hamiltonian dynamics may exhibit interesting behavior,
for example, non-trivial correlations may develop in time due to the interaction $I$.

In the disordered oscillator systems considered here we will assume that the two states $\varphi_1$ and $\varphi_2$ have exponentially clustered correlations with respect to the Hamiltonians $H_1$ and $H_2$ of the subsystem. We will show that the position-momentum correlations of the state will remain globally exponentially clustered, uniform in time and in the sizes of both subsystems. In fact, Theorem~\ref{thm:quenched} will be more general in allowing for the decomposition into an arbitrary number of subsystems.

Applications of Theorem~\ref{thm:quenched}, which we discuss in Section~\ref{sec:applications}, include the case where the initial product state consists of thermal states of the subsystems, so that the result of \cite{NSS} on exponential clustering of these states applies, or where one starts with products of eigenstates of the subsystems, so that our first result, Theorem~\ref{thm:Corr:pq}, can be applied in the subsystems. In physical terms these applications say that if each of the subsystems is localized in the sense of exponential decay of static correlations of eigenstates and thermal states within the subsystem, then this form of quantum quench yields no thermalization. In this context we include Theorem~\ref{thm:thermal:NSS}, as a technical result, proven in Appendix~\ref{sec:tempdep}, which improves results on thermal state correlations in \cite{NSS} by quantifying the temperature dependence.

\section{Model and Results}
\subsection{The Model} \label{sec:model}

For any dimension $d\geq 1$, we consider a coupled harmonic oscillator system, on a finite rectangular box $\Lambda:=([a_1,b_1] \times \ldots \times [a_d,b_d]) \cap\mathbb{Z}^d$, given by the Hamiltonian
\begin{equation}\label{eq:Ham}
H_{\Lambda}=\sum_{x\in\Lambda}\left( p_x^2+\frac{k_x}{2}q_x^2\right)+\sum_{{\tiny\begin{array}{c}
                                                                                \{x,y\}\subset\Lambda :\\
                                                                                |x-y|=1
                                                                              \end{array}}
}\lambda(q_x-q_y)^2.
\end{equation}
This Hamiltonian acts on the Hilbert space
\begin{equation}
\mathcal{H}_{\Lambda}=\bigotimes_{x\in\Lambda}L^2(\mathbb{R})=L^2(\mathbb{R}^{\Lambda})
\end{equation}
and $q_x$ and $p_x$ are, respectively, the position and momentum operators at the site $x \in \Lambda$. By standard results, these operators are self adjoint, on suitable domains, and satisfy the commutation relations
\begin{equation}
[q_x,q_y]=[p_x,p_y]=0, \text{ and } [q_x,p_y]=i\delta_{x,y}\idty \text{ for all } x,y\in\Lambda.
\end{equation}
Here $\delta_{x,y}$ is the Kronecker delta function.

For each $x\in\Lambda$, $k_x$ represents a variable spring constant. We introduce disorder by allowing the sequence $\{ k_x \}$ to be chosen as a sequence of
i.i.d.\ random variables. More precisely, we will assume that their common distribution $\mu$ is absolutely continuous,
\begin{equation} \label{eq:distr}
d\mu(k) = \nu(k)dk, \quad \mbox{with $\|\nu\|_{\infty} < \infty\;$ and $\;\supp \nu = [0,k_{max}]$}
\end{equation}
for some constant $k_{\max}<\infty$.

The Hamiltonian $H_{\Lambda}$ describes a convenient family of harmonic oscillators
that interact through nearest neighbor terms with a coupling strength of $\lambda>0$.
To be clear, the second sum in \eqref{eq:Ham} is taken over all undirected edges
$\{\{x,y\} \subset \Lambda :|x-y|=1\}$ where $|\cdot|$ denotes the 1-norm. With methods
similar to those of \cite{NSS}, the results we prove below generalize to a larger class of
disordered oscillator models; the caveat being that the assumptions on the effective one-particle
Hamiltonian, i.e. the analogue of (\ref{eq:eigenCorr}) below, would need to be verified on a case-by-case basis.
For ease of presentation, we restrict our attention to the model defined by (\ref{eq:Ham}) above.

As is well known, the analysis of general oscillator systems reduces, via Bogoliubov transformation,
to the analysis of an effective one-particle Hamiltonian, see, e.g., \cite{NSS} where this is reviewed for more general systems. In the specific case of (\ref{eq:Ham}), the
corresponding one-particle Hamiltonian is the finite volume Anderson model on $\ell^2(\Lambda)$, i.e.
\begin{equation}\label{eq:h0}
h_{\Lambda} = \lambda h_{0,\Lambda} + \frac{1}{2}k,
\end{equation}
where $h_{0,\Lambda}$ is the negative discrete Laplacian on $\Lambda$ and $\frac{1}{2}k$ an i.i.d.\ random potential. As we work in finite volume, the spectrum of $h_{\Lambda}$ is discrete and, under our assumption of an absolutely continuous distribution for the $k_x$, almost surely simple (as is seen easily by an analyticity argument).

By our assumptions, it is clear that $h_{\Lambda}$ is self-adjoint and moreover, using that $0\le h_{0,\Lambda} \le 4d$,
\begin{equation} \label{eq:spec}
\sigma(h_{\Lambda})\subseteq \left[\frac{1}{2}\min_{x\in\Lambda} k_x, \left(4d\lambda+\frac{k_{\max}}{2}\right)\right].
\end{equation}
This means that $h_{\Lambda}$ is invertible almost surely, but its inverse will not be uniformly bounded in the disorder. In fact, for large boxes $\Lambda$, the minimum of $\sigma(h_\Lambda)$ will be close to zero with high probability, due to the presence of large clusters where all $k_x$ are close to zero.

As a localization characteristic of $h_\Lambda$, we will assume that its {\it singular} eigenfunction correlators decay exponentially. More precisely, we assume that there exist constants $C<\infty$, $\eta>0$ and $0<s\le1$, independent of $\Lambda$, such that
\begin{equation}\label{eq:eigenCorr}
\mathbb{E}\left(\sup_{|g|\leq 1}|\langle\delta_x,h_{\Lambda}^{-\frac{1}{2}}g(h_{\Lambda})\delta_y\rangle|^s\right)\leq C e^{-\eta|x-y|},
\end{equation}
for all $x,y\in\Lambda$,
where $\mathbb{E}(\cdot)$ is the disorder average and $\{\delta_x\}_{x\in\Lambda}$ the canonical basis of $\ell^{2}(\Lambda)$. The supremum is taken over all functions $g:\mathbb{R}\rightarrow\mathbb{C}$ with $|g(x)|\leq 1$ and $g(h_{\Lambda})$ is defined by the functional calculus of symmetric matrices.

The non-standard feature of (\ref{eq:eigenCorr}) is the term $h_{\Lambda}^{-1/2}$. By the discussion above, this term does not have an a-priori norm bound and can thus not be absorbed into $g(h_{\Lambda})$. This term also is the reason for the inclusion of the moment $s$, which increases the applicability of (\ref{eq:eigenCorr}).

In the absence of the factor $h_{\Lambda}^{-1/2}$ the bound
\begin{equation}
\E\left(\sup_{|g|\le 1} |\langle\delta_x, g(h_{\Lambda})\delta_y\rangle| \right) \le Ce^{-\eta|x-y|}
\end{equation}
is a standard result for two regimes, e.g.\ \cite{AizenmanWarzel}: (i) in dimension $d=1$ (where $\Lambda$ is an interval) for any choice of the distribution density $\nu$, and (ii) in the large disorder regime for $d> 1$ (meaning that $\|\nu \|_{\infty}$ is sufficiently small, or $k_x$ is rescaled by a large parameter).

The singular eigenfunction correlators in (\ref{eq:eigenCorr}) were first introduced in \cite{NSS}. As is discussed there in Appendix A, (\ref{eq:eigenCorr}) holds for $d\ge 1$ and large disorder with $s=1$ (combing Propositions~A.1(b) and A.3(b)), and for $d=1$ and any $\nu$ with $s=1/2$ (Propositions A.1(c) and A.4(a)). Note that in the latter example, the validity of (\ref{eq:eigenCorr}) with $s=1/2$ does not trivially imply
that it is also valid with $s$ replaced by $1$, due to the fact that $|\langle\delta_x,h_{\Lambda}^{-\frac{1}{2}}g(h_{\Lambda})\delta_y\rangle|$ does not satisfy a uniform a-priori bound. Applications such as this are the reason we allow for the flexibility of $s$ in (\ref{eq:eigenCorr}).

As further discussed in Section~\ref{sec:Reduction}, the diagonalization of $h_{\Lambda}$ transforms $H_{\Lambda}$ into a model of free bosons
\begin{equation}\label{eq:H:freeBoson}
H_{\Lambda}=\sum_{k=1}^{|\Lambda|}\gamma_k(2B_k^* B_k+\idty),
\end{equation}
where $\gamma_k^2$ are the eigenvalues of $h_{\Lambda}$, and the operators $\{B_k\}$ satisfy canonical commutation relations (CCR) namely
\begin{equation}\label{eq:CCR}
[B_j,B_k]=[B_j^*,B_k^*]=0, \quad [B_j,B_k^*]=\delta_{j,k}\idty.
\end{equation}
In this case, there is a unique normalized vacuum state $\Omega \in \mathcal{H}_{\Lambda}$ corresponding to these $B$-operators, i.e., $\Omega$ satisfies $B_k \Omega=0$ for all $k$.
An explicit orthonormal basis (ONB) of eigenvectors of $H_{\Lambda}$ is then given by
\begin{equation} \label{def:psi}
\psi_\alpha=\prod_{j=0}^{|\Lambda|}\frac{1}{\sqrt{\alpha_j!}}(B_j^*)^{\alpha_j}\Omega
\end{equation}
for an excitation number configuration $\alpha=(\alpha_1,\ldots,\alpha_{|\Lambda|}) \in \mathbb{N}_0^{|\Lambda|}$ (here $\mathbb{N}_0:= \{ 0, 1, 2, \cdots \}$). One easily checks that these excited states satisfy
\begin{equation} \label{eq:energies}
H_{\Lambda} \psi_{\alpha} = E_{\alpha} \psi_{\alpha} \quad \mbox{with} \quad E_\alpha=\sum_{k=1}^{|\Lambda|}(2\alpha_k+1)\gamma_k
\end{equation}
and therefore, the gap above the ground state energy of $H_{\Lambda}$ is $2\min_k\gamma_k$.

\subsection{Dynamic Correlations in Eigenstates} \label{sec:dyn_cor}

One goal of this work is to estimate dynamic correlations of
position and momentum operators in arbitrary eigenstates. To make this more precise, let $\tau_t(A)$ denote the Heisenberg evolution of an operator $A$ under $H_\Lambda$, i.e.,
\begin{equation} \label{Heisenberg}
\tau_t(A)=e^{itH_{\Lambda}}Ae^{-itH_{\Lambda}},
\end{equation}
and for any trace-class operator $\rho$ on $\mathcal{H}_\Lambda$ take
\begin{equation} \label{Schrodinger}
\rho_t = e^{-itH_{\Lambda}} \rho e^{itH_\Lambda}
\end{equation}
to be the Schr\"odinger evolution of $\rho$. In this case, if $\langle A\rangle_{\rho}=\Tr[A\rho]$ denotes the $\rho$-expectation of the observable $A$,
then the Heisenberg and Schr\"odinger evolutions are related by $\langle \tau_t(A) \rangle_{\rho} = \langle A \rangle_{\rho_t}$.

It is convenient to introduce a $2|\Lambda|\times 2|\Lambda|$ correlation matrix
\begin{equation} \label{mixedcor}
\Gamma_{\rho}(t,t') : = \left\langle \tau_t \begin{pmatrix} q \\ p \end{pmatrix} (q^T, p^T) \right\rangle_{\rho_{t'}} -
\left\langle \tau_t \begin{pmatrix} q \\ p \end{pmatrix} \right\rangle_{\rho_{t'}} \left\langle (q^T, p^T) \right\rangle_{\rho_{t'}}
\end{equation}
which collects {\it mixed-time} dynamic correlations of position and momentum operators corresponding to $\rho$. Here $\begin{pmatrix} q\\ p \end{pmatrix}$ and $(q^T,p^T)$ are $2|\Lambda|$ column and row vectors, the time-evolution and $\rho$-expectation of vectors and matrices are understood component-wise, and columns are multiplied with rows in the usual sense of matrix multiplication to form matrices. Use of these mixed-time correlations $\Gamma_{\rho}(t,t')$ is convenient when formulating our main results below; in one we set $t'=0$ and in the other $t=0$.

Our first result concerns disorder-averaged correlations in eigenstates, i.e. we consider
$\rho= \rho_{\alpha} = | \psi_{\alpha} \rangle \langle \psi_{\alpha}|$ for some $\alpha \in \mathbb{N}_0^{| \Lambda|}$. Since eigenstates are time-invariant,
we set $t'=0$ in (\ref{mixedcor}) and denote by $\Gamma_{\alpha}(t) := \Gamma_{\rho_{\alpha}}(t,0)$. Note that
\begin{equation} \label{eq:exp:qp}
\langle \tau_t(q_x)\rangle_{\rho_{\alpha}} = \langle \tau_t(p_x)\rangle_{\rho_{\alpha}} = 0
\end{equation}
for all $x$, $t$ and $\alpha$ (in fact, we will also see this directly in the proof of Theorem~\ref{thm:Corr:pq} below). Thus (\ref{mixedcor}) simplifies to
\begin{equation} \label{dyn_eigen_cor}
\Gamma_{\alpha}(t) = \left\langle \tau_t \begin{pmatrix} q \\ p \end{pmatrix} (q^T, p^T) \right\rangle_{\rho_{\alpha}}.
\end{equation}

Lastly, for a $2|\Lambda| \times 2|\Lambda|$ block-matrix
\begin{equation} \label{eq:block}
M= \begin{pmatrix} A & B \\ C & D \end{pmatrix}, \; \mbox{let} \; M_{xy} = \begin{pmatrix} A_{xy} & B_{xy} \\ C_{xy} & D_{xy} \end{pmatrix}
\end{equation}
 be $2\times 2$-sub-matrices with matrix norms $\|M_{xy}\|$. For definiteness we choose the latter to be the Euclidean operator norm.

\begin{thm}\label{thm:Corr:pq}
Assume that the effective Hamiltonian $h_{\Lambda}$ satisfies (\ref{eq:eigenCorr}), with bounds uniform in $\Lambda$. Then
\begin{equation}\label{eq:corr:qp}
\mathbb{E}\left(\sup_{t} \|(\Gamma_{\alpha}(t))_{xy}\|^s \right) \leq C C' (1+\|\alpha\|_\infty)^{1+s}e^{-\eta|x-y|}
\end{equation}
for all finite rectangular boxes $\Lambda \subset \Z^d$, $x,y\in\Lambda$ and $\alpha\in\N_0^{| \Lambda|}$. Here $C$, $\eta$ and $s$ are as in (\ref{eq:eigenCorr}) and $C'<\infty$ depends on $d$, $\lambda$, $s$ and $k_{max}$, but is independent of $\Lambda$.
\end{thm}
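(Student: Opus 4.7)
The plan is to reduce each entry of $\Gamma_\alpha(t)$ to a matrix element of the form $\langle\delta_x, h_\Lambda^{-1/2} g(h_\Lambda)\delta_y\rangle$ with $|g|\le 1$, so that the singular eigenfunction correlator hypothesis (\ref{eq:eigenCorr}) applies directly and the excitation-number dependence is pulled out as an explicit prefactor.

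The first step is the Bogoliubov diagonalization reviewed around (\ref{eq:H:freeBoson}). Writing $h_\Lambda=UDU^T$ with $D=\diag(\gamma_1^2,\ldots,\gamma_{|\Lambda|}^2)$, one has $q_x=\sum_k U_{xk}(2\gamma_k)^{-1/2}(B_k+B_k^*)$ and $p_x=-i\sum_k U_{xk}\sqrt{\gamma_k/2}\,(B_k-B_k^*)$, while the Heisenberg evolution acts diagonally as $\tau_t(B_k)=e^{-2i\gamma_k t}B_k$ and $\tau_t(B_k^*)=e^{2i\gamma_k t}B_k^*$. Since $\psi_\alpha$ from (\ref{def:psi}) is a joint eigenvector of the number operators $B_k^*B_k$ and $B_k$, $B_k^*$ shift occupations by $\mp 1$, the one-point expectations $\langle B_k\rangle_{\psi_\alpha}$ and $\langle B_k^*\rangle_{\psi_\alpha}$ vanish, proving (\ref{eq:exp:qp}). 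Using the pair identities $\langle B_j^*B_k\rangle_{\psi_\alpha}=\delta_{jk}\alpha_k$, $\langle B_jB_k^*\rangle_{\psi_\alpha}=\delta_{jk}(\alpha_k+1)$, and $\langle B_jB_k\rangle_{\psi_\alpha}=\langle B_j^*B_k^*\rangle_{\psi_\alpha}=0$, each of the four entries of $(\Gamma_\alpha(t))_{xy}$ collapses to a single mode sum of the schematic form $\sum_k U_{xk}U_{yk}\bigl[(2\alpha_k+1)\cos(2\gamma_k t)f_1(\gamma_k)+i\sin(2\gamma_k t)f_2(\gamma_k)\bigr]$, with $f_1,f_2\in\{\gamma_k^{-1},1,\gamma_k\}$ depending on whether we are computing the $qq$-, mixed, or $pp$-entry.

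The crucial step is to represent each such sum as $\langle\delta_x, h_\Lambda^{-1/2}G(h_\Lambda)\delta_y\rangle$ for a bounded function $G$. Because (\ref{eq:distr}) is absolutely continuous, $\sigma(h_\Lambda)$ is almost surely simple, so $G$ can be defined on the spectrum by prescribing its values at the eigenvalues $\gamma_k^2$ and then extended to $\R$ without enlarging $\|G\|_\infty$. For the $qq$-entry the factor $\gamma_k^{-1}$ pairs naturally with $h_\Lambda^{-1/2}$; for the mixed and $pp$-entries we insert $h_\Lambda^{-1/2}h_\Lambda^{1/2}$ (respectively $h_\Lambda^{-1/2}h_\Lambda$) and absorb the extra $h_\Lambda^{1/2}$ (resp.\ $h_\Lambda$) into $G$, using the deterministic spectral upper bound $\sigma(h_\Lambda)\subset[0,C_0]$ with $C_0=4d\lambda+k_{\max}/2$ from (\ref{eq:spec}). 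Combined with $(2\alpha_k+1)\le 1+2\|\alpha\|_\infty$ this gives $\|G\|_\infty\le C''(1+\|\alpha\|_\infty)$ with $C''=C''(d,\lambda,k_{\max})$, uniformly in $t$.

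Summing the four entries and taking the $2\times 2$ operator norm yields
$$\|(\Gamma_\alpha(t))_{xy}\|\le C'''(1+\|\alpha\|_\infty)\sup_{|g|\le 1}|\langle\delta_x, h_\Lambda^{-1/2}g(h_\Lambda)\delta_y\rangle|$$
uniformly in $t$; raising to the $s$-th power, taking disorder expectation and invoking (\ref{eq:eigenCorr}) delivers (\ref{eq:corr:qp}), with the exponent $1+s$ in the stated bound coming from the trivial $(1+\|\alpha\|_\infty)^s\le(1+\|\alpha\|_\infty)^{1+s}$. I expect the main obstacle to be the third step: one must simultaneously route the $\gamma_k$, $\gamma_k^{-1}$, $\alpha_k$ and $e^{\pm 2i\gamma_k t}$ weights into a single bounded multiplier against $h_\Lambda^{-1/2}$, since only this combination is controlled by (\ref{eq:eigenCorr}), while $h_\Lambda^{-1/2}$ itself has no $\Lambda$-uniform a-priori operator-norm bound in the disorder.
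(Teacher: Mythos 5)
Your proposal is correct and follows the same overall route as the paper: diagonalize via the Bogoliubov transformation, compute $\Gamma_\alpha(t)$ explicitly in terms of $h_\Lambda$ (the paper's Lemma~\ref{lem:Cor_Mat}), use the almost-sure simplicity of the spectrum to view $\mathcal{O}\alpha\mathcal{O}^T$ as a function of $h_\Lambda$, and reduce each entry to the singular correlator bound (\ref{eq:eigenCorr}), treating the less singular entries by absorbing extra powers of $h_\Lambda$ into the bounded multiplier via (\ref{eq:spec}). The one step where you genuinely diverge is the extraction of the $\alpha$-dependence. You normalize the whole multiplier by its sup norm and pull out a single factor $(1+\|\alpha\|_\infty)$ \emph{before} raising to the power $s$, which yields $(1+\|\alpha\|_\infty)^{s}$ and hence, as you note, the stated $(1+\|\alpha\|_\infty)^{1+s}$ trivially; this is legitimate because, for each disorder realization and each $t$, the rescaled multiplier is an admissible $g$ with $|g|\le 1$ in the supremum of (\ref{eq:eigenCorr}). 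The paper instead expands $\mathcal{O}\alpha\mathcal{O}^T=\sum_{a=0}^{\|\alpha\|_\infty}a\,\chi_{J(a)}(h_\Lambda)$ over the level sets $J(a)=\{\gamma_j^2:\alpha_j=a\}$ and applies subadditivity of $x\mapsto x^s$ term by term, which costs the factor $\sum_a a^s\le \|\alpha\|_\infty^{s}(\|\alpha\|_\infty+1)$ and produces the exponent $1+s$; the payoff of that decomposition is the refinement $\min\{\|\alpha\|_1,\|\alpha\|_\infty^{1+s}\}$ recorded in (\ref{eq:corr:qp:2}), obtained by keeping one spectral projection per eigenvalue. Your single rescaling does not produce that statement as written, but your resulting bound $(1+\|\alpha\|_\infty)^{s}$ is in fact sharper than both, so nothing is lost. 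All other ingredients (vanishing of the one-point functions, the pair expectations $\langle B_j^*B_k\rangle_{\rho_\alpha}=\alpha_k\delta_{jk}$, $\langle B_jB_k^*\rangle_{\rho_\alpha}=(\alpha_k+1)\delta_{jk}$, and the need for a.s.\ nondegeneracy to define the multiplier consistently on the spectrum) match the paper's argument.
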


We finally note that our proof of Theorem~\ref{thm:Corr:pq} in Section~\ref{sec:Proof:cluster} below will show, the bound in (\ref{eq:corr:qp}) can be slightly improved to
\begin{equation}\label{eq:corr:qp:2}
\leq \tilde{C} \left(1+\min\{\|\alpha\|_1, \|\alpha\|_\infty^{1+s}\}\right)e^{-\eta|x-y|},
\end{equation}
which is better for excitation vectors with only a few large excitations $\alpha_j$ (say, just one of them).

\subsection{Quenched Correlations}\label{sec:Results:Quenshed}
Our second result concerns the position and momentum correlations when a quantum quench is applied. In particular, we  decompose the rectangular box $\Lambda$ into $M$ disjoint rectangular sub-boxes
\begin{equation} \Lambda =\biguplus_{\ell=1}^M\Lambda_\ell.
\end{equation}
For $\ell=1,2,\ldots,M$, consider the restrictions $H_{\Lambda_\ell}$ of the harmonic system $H_{\Lambda}$ to $\Lambda_\ell$.
Let $H_{0,\Lambda}$ denote the Hamiltonian of the non-interacting system on $\mathcal{H}_{\Lambda}$,
\begin{equation}
H_{0,\Lambda}=\sum_{\ell=1}^M H_{\Lambda_\ell} \otimes \idty_{\Lambda \setminus\Lambda_\ell}.
\end{equation}
For each $\ell=1,\ldots,M$, let $\rho_\ell$ be a state acting on the Hilbert space $\mathcal{H}_{\Lambda_\ell}:=L^2(\mathbb{R}^{\Lambda_\ell})$. In particular, we will consider the cases where the $\rho_\ell$ are either eigenstates or thermal states of $H_{\Lambda_\ell}$.
We are interested in the Schr\"odinger time evolution $\rho_t$, under the full Hamiltonian $H_{\Lambda}$ given in (\ref{eq:Ham}), of the state initially given by the product state
\begin{equation}\label{eq:product}
\rho:= \bigotimes_{\ell=1}^M\rho_\ell.
\end{equation}
This {\it quantum quench} is understood as a sudden change in the Hamiltonian $H_{0,\Lambda}$ at $t=0$, consisting in switching on the interactions between the subsystems $H_{\Lambda_\ell}$.

To describe the dynamic correlations in this case we set the first argument equal to zero in (\ref{mixedcor}) and define
\begin{equation} \label{eq:quenchcor}
\tilde{\Gamma}_{\rho}(t) := \Gamma_{\rho}(0,t).
\end{equation}
That the local systems $H_{\Lambda_\ell}$ are initially uncorrelated means that for $x\in\Lambda_j$ and $y\in\Lambda_\ell$ with $j\neq \ell$,
\begin{equation}
(\tilde{\Gamma}_{\rho}(0))_{xy}=0.
\end{equation}

The following result says that if each of the subsystems $H_{\Lambda_\ell}$ is localized in the sense of exponential decay of static correlations within the subsystem, then their quantum quench, described above, yields no thermalization. More precisely, the so-called quenched dynamic correlations of the product state $\rho$ remain exponentially decaying in the fully interacting system for all times (here we also use the local $2|\Lambda_\ell| \times 2|\Lambda_\ell|$ correlation matrices $\Gamma_{\rho_\ell}$ with $t=t'=0$ in (\ref{mixedcor})).

\begin{thm}\label{thm:quenched}
Assume that the effective Hamiltonian $h_{\Lambda}$ satisfies (\ref{eq:eigenCorr}), with bounds uniform in $\Lambda$. Let $\rho_\ell \in\mathcal{B}(\mathcal{H}_{\Lambda_\ell})$, $\ell=1,\ldots,M$, be a family of states such that, for some $C'<\infty$, and $\eta'>0$,
\begin{equation}\label{AssumLocal}
\mathbb{E}\left(\|(\Gamma_{\rho_\ell})_{xy}\|^s\right)\leq C' e^{-\eta' |x-y|}
\end{equation}
for all $\ell$ and all $x,y\in\Lambda_\ell$, where $0<s\le 1$ is as in (\ref{eq:eigenCorr}).

Then, for $\eta$ from (\ref{eq:eigenCorr}), $\tilde{\eta}:= \frac{1}{6}\min\{\eta, \eta'\}$ and $\rho = \bigotimes_\ell \rho_\ell$, there exists a constant $C''<\infty$ such that
\begin{equation}\label{eq:quenched}
\mathbb{E}\left(\sup_{t\in \R} \|(\tilde{\Gamma}_{\rho}(t))_{xy}\|^\frac{s}{3}
\right)\leq (C')^{1/3}C'' e^{-\tilde{\eta}|x-y|}
\end{equation}
for all $x,y\in\Lambda$. Here $C'$ is the constant from (\ref{AssumLocal}) and $C''$ depends on $d$, $\lambda$, $s$, $k_{max}$ and $\tilde{\eta}$, but is independent of $\Lambda$ and the number of subregions $M$.
\end{thm}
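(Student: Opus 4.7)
The proof rests on the quadratic structure of $H_\Lambda$. The Heisenberg equations of motion $\dot q=2p$, $\dot p=-2h_\Lambda q$ close onto the position/momentum pair and give
\[
\tau_t\!\begin{pmatrix}q\\p\end{pmatrix}=T(t)\begin{pmatrix}q\\p\end{pmatrix},\quad T(t):=\begin{pmatrix}\cos(2t\sqrt{h_\Lambda}) & h_\Lambda^{-1/2}\sin(2t\sqrt{h_\Lambda})\\ -\sqrt{h_\Lambda}\sin(2t\sqrt{h_\Lambda}) & \cos(2t\sqrt{h_\Lambda})\end{pmatrix},
\]
where $T(t)$ is a random $2|\Lambda|\times 2|\Lambda|$ symplectic matrix whose entries are scalars (not operators). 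Combining $\langle\cdot\rangle_{\rho_t}=\langle\tau_t(\cdot)\rangle_\rho$ with the homomorphism property of $\tau_t$ and pulling $T(t)$ out of state expectations yields the identity $\tilde\Gamma_\rho(t)=T(t)\,\Gamma_\rho(0,0)\,T(t)^T$. Because $\rho=\bigotimes_\ell\rho_\ell$ and local canonical variables from distinct subregions commute and factorize in expectation, $\Gamma_\rho(0,0)$ is block-diagonal across the $\Lambda_\ell$ with block $\Gamma_{\rho_\ell}$ on $\Lambda_\ell$. In $2\times 2$ sub-blocks,
\[
(\tilde\Gamma_\rho(t))_{xy}=\sum_{u,v\in\Lambda}T(t)_{xu}\,(\Gamma_\rho)_{uv}\,T(t)^T_{vy},\ \ (\Gamma_\rho)_{uv}=0\ \text{unless}\ u,v\ \text{share a subregion}.
\]

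Next I derive dynamical decay for $T(t)$. Each scalar entry of the $2\times 2$ block $T(t)_{xu}$ is either of the form $\langle\delta_x,g_t(h_\Lambda)\delta_u\rangle$ with $\|g_t\|_\infty\le 1$ (the $\cos$ entries, and after extracting the deterministic bound $\sqrt{\|h_\Lambda\|}\le\sqrt{4d\lambda+k_{\max}/2}$, also the $\sqrt{h_\Lambda}\sin$ entry), or of the form $\langle\delta_x,h_\Lambda^{-1/2}g_t(h_\Lambda)\delta_u\rangle$ (the $h_\Lambda^{-1/2}\sin$ entry). The former class is in turn dominated by the latter: for any $|f|\le 1$ the rescaling $g(\lambda):=f(\lambda)\sqrt{\lambda/\|h_\Lambda\|}$ satisfies $\|g\|_\infty\le 1$ on $\sigma(h_\Lambda)$ and $f(h_\Lambda)=\sqrt{\|h_\Lambda\|}\,h_\Lambda^{-1/2}g(h_\Lambda)$. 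Together with subadditivity $(a+b)^s\le a^s+b^s$ and \eqref{eq:eigenCorr}, this yields
\[
\mathbb{E}\Bigl(\sup_{t\in\R}\|T(t)_{xu}\|^s\Bigr)\le C_1 e^{-\eta|x-u|},
\]
with $C_1$ depending only on $C,\eta,s,d,\lambda,k_{\max}$.

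The proof concludes by a H\"older-and-sum argument. Bringing $\sup_t$ inside the sum, using $\sup_t(fg)\le(\sup_tf)(\sup_tg)$ for non-negative factors, subadditivity of $x\mapsto x^{s/3}$, and H\"older's inequality with three conjugate exponents equal to $3$ produce
\[
\mathbb{E}\sup_t\|(\tilde\Gamma_\rho(t))_{xy}\|^{s/3}\le\sum_{u,v\in\Lambda}\bigl(\mathbb{E}\sup_t\|T(t)_{xu}\|^s\bigr)^{1/3}\bigl(\mathbb{E}\|(\Gamma_\rho)_{uv}\|^s\bigr)^{1/3}\bigl(\mathbb{E}\sup_t\|T(t)^T_{vy}\|^s\bigr)^{1/3}.
\]
Inserting the bound on $T(t)$ above and \eqref{AssumLocal} (trivially valid with right-hand side zero when $u,v$ lie in different subregions) dominates each summand by $(C')^{1/3}C_2\,e^{-\gamma(|x-u|+|u-v|+|v-y|)}$ with $\gamma=\min\{\eta,\eta'\}/3$. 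Splitting the exponent into halves and using $|x-u|+|u-v|+|v-y|\ge|x-y|$ in one half extracts the target factor $e^{-\tilde\eta|x-y|}$ with $\tilde\eta=\gamma/2=\min\{\eta,\eta'\}/6$, while the other half leaves an absolutely convergent double sum bounded by $\bigl(\sum_{w\in\Z^d}e^{-\tilde\eta|w|}\bigr)^2$, giving a constant $C''$ uniform in $\Lambda$ and $M$.

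The central obstacle is the $h_\Lambda^{-1/2}$ factor in the off-diagonal block of $T(t)$: since $\min\sigma(h_\Lambda)\to 0$ almost surely as $\Lambda\uparrow\Z^d$, no uniform bound on $\|h_\Lambda^{-1/2}\|$ is available, forcing both the fractional-moment formulation of \eqref{eq:eigenCorr} and reliance on the \emph{singular} (rather than the ordinary) eigenfunction correlator. The final loss to the moment $s/3$ and rate $\min\{\eta,\eta'\}/6$ is precisely the cost of splitting into three factors via H\"older (costing $\times 3$ in both) and then halving the decay rate by the triangle-inequality step needed for summability (extra $\times 2$ in the rate).
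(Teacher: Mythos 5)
Your proposal is correct and follows essentially the same route as the paper: the identity $\tilde{\Gamma}_{\rho}(t)=V_t\Gamma_{\rho}V_t^T$ with the same scalar-entried propagator (which the paper obtains via the Bogoliubov transformation in Lemma~\ref{lem:Cor_calc} rather than by integrating the Heisenberg equations, a purely presentational difference), the same reduction of each block of $V_t$ to the singular eigenfunction correlator (\ref{eq:eigenCorr}), the block-diagonality of $\Gamma_{\rho}$ for the product state, and the identical three-factor H\"older-and-sum step yielding the moment $s/3$ and the rate $\tilde{\eta}=\frac{1}{6}\min\{\eta,\eta'\}$.
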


This will be proven in Section~\ref{sec:Proof:Quenched}.

By results in \cite{NSS}, special cases where condition (\ref{AssumLocal}) is known to hold include the ground state and thermal states of the subsystems $H_{\Lambda_\ell}$. Theorem~\ref{thm:Corr:pq} above extends this to excited states. In each of these cases, (\ref{AssumLocal}) actually follows from (\ref{eq:eigenCorr}). Theorem~\ref{thm:quenched} allows for the additional freedom to choose different temperatures and different maximal excitation numbers in each of the subsystems, or even to choose some of the factors in the initial product state as thermal states and others as excited states. It is then of some interest to understand the dependence of the constants in (\ref{eq:quenched}) on these additional parameters. We include a more thorough discussion of this in Section~\ref{sec:applications} at the end of this paper.

%
%
%

\section{Reduction to the Effective Hamiltonian}\label{sec:Reduction}

In this section, we briefly review the previously mentioned reduction of the many-body Hamiltonian $H_{\Lambda}$
to the effective one-particle $h_{\Lambda}$ as a means to introduce some relevant notation. Once this is done, we provide a
simple lemma concerning mixed-time correlations, i.e. (\ref{mixedcor}).

Keeping with the vector notation established in Section~\ref{sec:dyn_cor}, one readily sees that
the oscillator Hamiltonian $H_{\Lambda}$ in (\ref{eq:Ham}) can be re-written as
\begin{equation} \label{base_ham}
H_{\Lambda} = (q^T, p^T) \begin{pmatrix} h_{\Lambda} & 0 \\ 0 & \idty \end{pmatrix} \begin{pmatrix} q \\ p \end{pmatrix}
\end{equation}
with $h_{\Lambda}$ the effective one-particle Hamiltonian described in (\ref{eq:h0}).
The real non-negative matrix $h_{\Lambda}$ can be diagonalized in terms of a real orthogonal $\mathcal{O}: \C^{|\Lambda|} \to \ell^2(\Lambda)$ and its transpose $\mathcal{O}^T = \mathcal{O}^{-1}$, i.e.
\begin{equation}
\mathcal{O}^T h_{\Lambda} \mathcal{O} = \gamma^2
\end{equation}
where $\gamma^2 = {\rm diag}( \gamma_k^2)$ with $1 \leq k \leq | \Lambda|$. Here the
numbers $\gamma_k^2$ are the eigenvalues of $h_{\Lambda}$ counted according to multiplicity.
By our assumptions on the spring constants, the eigenvalues of $h_{\Lambda}$ are almost surely positive, and we will denote by
$\gamma = {\rm diag}( \gamma_k)$ with $\gamma_k>0$ for all $1 \leq k \leq | \Lambda|$. As discussed in Section~\ref{sec:model}, the $\gamma_k$ are almost surely non-degenerate.

As is well-known, see \cite{NSS} for more details in this specific setting, $H_{\Lambda}$ can be reduced to a system of free Bosons. In fact, consider the mapping
\begin{equation} \label{def:Vinv}
V^{-1} = \frac{1}{\sqrt{2}} \begin{pmatrix} \idty & i \idty \\ \idty & - i \idty \end{pmatrix} \begin{pmatrix} \gamma^{1/2} \mathcal{O}^T & 0 \\ 0 & \gamma^{-1/2} \mathcal{O}^T \end{pmatrix}.
\end{equation}
Our assumptions guarantee this map is almost surely well-defined, invertible, and one readily checks that the product
\begin{equation} \label{def:B}
\begin{pmatrix} B \\ B^* \end{pmatrix} : = V^{-1} \begin{pmatrix} q \\ p \end{pmatrix}
\end{equation}
produces a collection of operators $\{B_k \}_{k=1}^{| \Lambda|}$ on $\mathcal{H}_{\Lambda}$ which, together with their adjoints, satisfy the CCR, i.e. (\ref{eq:CCR}). Moreover,
in terms of these $B$-operators
\begin{equation} \label{free_Bos}
H_\Lambda = \sum_{k=1}^{| \Lambda|} \gamma_k ( 2 B_k^*B_k + \idty),
\end{equation}
a model of free Bosons.

Due to the simple form of (\ref{free_Bos}), the dynamics of these $B$-operators is
\begin{equation}
\tau_t \begin{pmatrix} B \\ B^* \end{pmatrix} = \begin{pmatrix} e^{-2it \gamma} & 0 \\ 0 & e^{2it \gamma}  \end{pmatrix} \begin{pmatrix} B \\ B^* \end{pmatrix}
\end{equation}
{f}rom which the dynamics of position and momentum operators readily follows,
\begin{equation} \label{dyn_qp}
\tau_t \begin{pmatrix} q \\ p \end{pmatrix} = V \begin{pmatrix} e^{-2it \gamma} & 0 \\ 0 & e^{2it \gamma}  \end{pmatrix} \begin{pmatrix} B \\ B^* \end{pmatrix},
\end{equation}
where we have used (\ref{def:B}). It will also be convenient to note that
\begin{equation} \label{def:V}
 V =  \frac{1}{\sqrt{2}}  \begin{pmatrix}  \mathcal{O} \gamma^{-1/2} & 0 \\ 0 & \mathcal{O}  \gamma^{1/2} \end{pmatrix} \begin{pmatrix} \idty &  \idty \\ -i \idty & i \idty \end{pmatrix}.
\end{equation}

As indicated in Section~\ref{sec:dyn_cor}, much of our analysis reduces to the investigation of
the mixed-time correlation function $\Gamma_{\rho}(t,t')$ in (\ref{mixedcor}) for a state $\rho$.
 Since (\ref{dyn_qp}) shows that the dynamics of position and momentum operators
can be expressed in terms of the $B$-operators, up to scalar-valued coefficients, one immediately has the following.
\begin{lem} \label{lem:Cor_calc}
Let $\rho$ be a state on $\mathcal{H}_\Lambda$. Suppose that the matrix
\begin{equation} \label{Cor_B}
\Gamma_{\rho}^B = \left\langle \begin{pmatrix} B \\ B^* \end{pmatrix} (B^T, (B^*)^T) \right\rangle_{\rho} - \left\langle \begin{pmatrix} B \\ B^* \end{pmatrix} \right\rangle_{\rho} \left\langle (B^T, (B^*)^T) \right\rangle_{\rho}
\end{equation}
is well-defined. Then, for all $t,t' \in \mathbb{R}$,
\begin{equation} \label{mixedtimecor}
\Gamma_{\rho}(t,t') = V \begin{pmatrix}  e^{-2i(t+t') \gamma} & 0 \\ 0 & e^{2i(t+t') \gamma} \end{pmatrix} \Gamma_{\rho}^B \begin{pmatrix}  e^{-2it' \gamma} & 0 \\ 0 & e^{2it' \gamma}  \end{pmatrix} V^T
\end{equation}
where $V$ is as in (\ref{def:V}).
\end{lem}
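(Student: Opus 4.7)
The plan is to take the definition of $\Gamma_\rho(t,t')$ in (\ref{mixedcor}) and simply substitute the explicit form of $\tau_t\begin{pmatrix} q \\ p\end{pmatrix}$ from (\ref{dyn_qp}), using the duality between the Heisenberg and Schr\"odinger pictures to move the $t'$-dependence onto the observables. Specifically, the identity $\langle A\rangle_{\rho_{t'}} = \langle \tau_{t'}(A)\rangle_\rho$ noted in the text converts
\[
\left\langle \tau_t\begin{pmatrix} q \\ p \end{pmatrix}(q^T,p^T)\right\rangle_{\rho_{t'}} = \left\langle \tau_{t+t'}\begin{pmatrix} q \\ p \end{pmatrix}\,\tau_{t'}(q^T,p^T)\right\rangle_{\rho},
\]
and similarly for the product of single expectations.

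Next I would apply (\ref{dyn_qp}) twice. For the column vector, the factor at time $t+t'$ appears on the left:
\[
\tau_{t+t'}\begin{pmatrix} q \\ p \end{pmatrix} = V \begin{pmatrix} e^{-2i(t+t')\gamma} & 0 \\ 0 & e^{2i(t+t')\gamma}\end{pmatrix}\begin{pmatrix} B \\ B^*\end{pmatrix}.
\]
For the row vector one simply transposes the analogous identity at time $t'$. Since the relevant diagonal matrix $\mathrm{diag}(e^{-2it'\gamma},e^{2it'\gamma})$ is symmetric, taking the transpose yields
\[
\tau_{t'}(q^T,p^T) = (B^T,(B^*)^T)\begin{pmatrix} e^{-2it'\gamma} & 0 \\ 0 & e^{2it'\gamma}\end{pmatrix} V^T.
\]
Multiplying the two expressions produces a sandwich in which the scalar-valued (i.e.\ $\mathbb{C}$-number) matrices $V$, the two diagonal exponentials, and $V^T$ sit on the outside of the operator product $\begin{pmatrix} B \\ B^*\end{pmatrix}(B^T,(B^*)^T)$.

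Finally, because $V$, $V^T$ and the diagonal exponentials have only scalar entries, they commute with the state expectation and can be pulled out of $\langle\,\cdot\,\rangle_\rho$ component-wise. Doing this to both terms in the definition of $\Gamma_\rho(t,t')$ and subtracting, the same outer factors multiply the exact difference defining $\Gamma_\rho^B$ in (\ref{Cor_B}), yielding precisely (\ref{mixedtimecor}). The computation is entirely algebraic and the only small point that requires care is the bookkeeping of the transpose on the row side, which simplifies because of the diagonal nature of the time-evolution factor; thus I do not anticipate any real obstacle.
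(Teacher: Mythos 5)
Your argument is correct and is exactly the computation the paper treats as immediate: it relies on the duality $\langle A\rangle_{\rho_{t'}}=\langle\tau_{t'}(A)\rangle_{\rho}$, the group property $\tau_{t'}\circ\tau_t=\tau_{t+t'}$, the explicit formula (\ref{dyn_qp}), and the fact that $V$ and the diagonal exponentials are scalar-valued and hence pull out of the expectation. No gaps; this matches the paper's (omitted) proof in both substance and route.
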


%
%
%

\section{Proof of Theorem~\ref{thm:Corr:pq}} \label{sec:Proof:cluster}

 We begin with a calculation which evaluates the eigenstate correlation matrix $\Gamma_{\alpha}(t)$ given by (\ref{dyn_eigen_cor}) in terms of the effective Hamiltonian $h_\Lambda$, using Lemma~\ref{lem:Cor_calc}.

\begin{lem}\label{lem:Cor_Mat}
We have the identity
\begin{eqnarray} \label{gen_cor_alpha}
\Gamma_{\alpha}(t) & = & \begin{pmatrix} \mathcal{O} \alpha \mathcal{O}^T & 0 \\ 0 & \mathcal{O} \alpha \mathcal{O}^T \end{pmatrix} \begin{pmatrix} h_\Lambda^{-1/2} \cos(2 t h_\Lambda^{1/2}) & \sin(2 t h_\Lambda^{1/2}) \\ -\sin(2 t h_\Lambda^{1/2}) & h_\Lambda^{1/2} \cos(2t h_\Lambda^{1/2}) \end{pmatrix} + \\ \nonumber
& \mbox{ } & \quad + \frac{1}{2} \begin{pmatrix} h_\Lambda^{-1/2} e^{-2i t h_\Lambda^{1/2}} & ie^{-2i t h_\Lambda^{1/2}} \\ -ie^{-2i t h_\Lambda^{1/2}} & h_\Lambda^{1/2} e^{-2it h_\Lambda^{1/2}} \end{pmatrix}.
\end{eqnarray}
\end{lem}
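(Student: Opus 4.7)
My plan is to apply Lemma~\ref{lem:Cor_calc} with $\rho=\rho_\alpha=|\psi_\alpha\rangle\langle\psi_\alpha|$ and $t'=0$. This reduces the identity to two tasks: (i) explicit evaluation of the bosonic correlation matrix $\Gamma_{\rho_\alpha}^B$ from (\ref{Cor_B}), and (ii) the block-matrix algebra of the conjugation in (\ref{mixedtimecor}).

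For task (i), the excited states $\psi_\alpha$ in (\ref{def:psi}) carry definite mode occupations, so the standard ladder action of $B_k$ and $B_k^*$ on the ONB $\{\psi_\alpha\}$ makes all one-point expectations $\langle B_k\rangle_{\rho_\alpha}$ and $\langle B_k^*\rangle_{\rho_\alpha}$ vanish by orthogonality. Together with the linearity of $q,p$ in $B,B^*$ from (\ref{def:B}), this already establishes (\ref{eq:exp:qp}). By the same orthogonality, $\langle B_jB_k\rangle_{\rho_\alpha}=\langle B_j^*B_k^*\rangle_{\rho_\alpha}=0$, while the CCR (\ref{eq:CCR}) yield $\langle B_j^*B_k\rangle_{\rho_\alpha}=\alpha_j\delta_{jk}$ and $\langle B_jB_k^*\rangle_{\rho_\alpha}=(\alpha_j+1)\delta_{jk}$. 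Writing $\alpha=\mathrm{diag}(\alpha_k)$ for the diagonal matrix of occupations gives the block form
\[
\Gamma_{\rho_\alpha}^B=\begin{pmatrix}0 & \alpha+\idty\\ \alpha & 0\end{pmatrix}.
\]

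For task (ii), I substitute this into (\ref{mixedtimecor}) at $t'=0$ and expand using the explicit factorization (\ref{def:V}) of $V$. The key algebraic observation is that $\alpha$ and $\gamma$ are simultaneously diagonal in the eigenmode basis, hence commute with every $f(\gamma)$; consequently the $\mathcal{O}$-conjugation separates as $\mathcal{O}\,\alpha f(\gamma)\,\mathcal{O}^T = a\,f(h_\Lambda^{1/2})$ with $a:=\mathcal{O}\alpha\mathcal{O}^T$. The outer $V$ and $V^T$ deposit $\gamma^{-1/2}\cdot\gamma^{-1/2}=\gamma^{-1}$ in the $(1,1)$ block, $\gamma^{1/2}\cdot\gamma^{1/2}=\gamma$ in the $(2,2)$ block, and $\gamma^{\pm 1/2}\gamma^{\mp 1/2}=\idty$ in the off-diagonal blocks, which after $\mathcal{O}$-conjugation become $h_\Lambda^{-1/2}$, $h_\Lambda^{1/2}$, and $\idty$, respectively. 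Regrouping $(\alpha+\idty)e^{-2it\gamma}+\alpha e^{2it\gamma}=2\alpha\cos(2t\gamma)+e^{-2it\gamma}$ (used in the diagonal blocks), together with the sign-changed analogue $(\alpha+\idty)e^{-2it\gamma}-\alpha e^{2it\gamma}=-2i\alpha\sin(2t\gamma)+e^{-2it\gamma}$ (used in the off-diagonal blocks), cleanly isolates an $a$-proportional part, which assembles into the first matrix of (\ref{gen_cor_alpha}), from a vacuum part, which assembles into the second.

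I expect the main obstacle to be purely bookkeeping: the factors of $1/\sqrt{2}$, the signs, the $\pm i$ from the Fourier-type block in (\ref{def:V}), and the different fractional powers of $\gamma$ in its four blocks must all be tracked consistently so that every entry lines up with (\ref{gen_cor_alpha}). No analytic subtlety arises---the identity is a purely algebraic consequence of the reduction of $H_\Lambda$ to a free-boson system, and the only input beyond linear algebra is the vanishing of non-number-conserving boson two-point functions in the eigenstates.
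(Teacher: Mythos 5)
Your proposal is correct and follows essentially the same route as the paper's own proof: evaluate $\Gamma_{\rho_\alpha}^B$ via orthogonality and the CCR to get the off-diagonal block form with $\alpha$ and $\alpha+\idty$, then push it through Lemma~\ref{lem:Cor_calc} at $t'=0$ and carry out the block conjugation by $V$, with the regrouping $(\alpha+\idty)e^{-2it\gamma}\pm\alpha e^{2it\gamma}$ isolating the $\alpha$-proportional part from the vacuum part exactly as in the paper. The bookkeeping you flag (the $1/\sqrt2$ factors, the $\pm i$ blocks, and the $\gamma^{\pm1/2}$ powers turning into $h_\Lambda^{\pm1/2}$ or $\idty$ after $\mathcal{O}$-conjugation) is indeed all that remains, and it checks out.
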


Here, in a slight abuse of notation, we use $\alpha$ also to denote the diagonal matrix with entries $\alpha_k$, $1 \leq k \leq |\Lambda|$.

\begin{proof}
By orthogonality of the eigenvectors $\psi_{\alpha}$, it is clear that each of
\begin{equation}
\left\langle \begin{pmatrix} B \\ B^* \end{pmatrix} \right\rangle_{\rho_{\alpha}} , \langle BB^T \rangle_{\rho_\alpha} , \mbox{ and } \langle B^*(B^*)^T \rangle_{\rho_\alpha}
\end{equation}
vanish identically. We note that this and (\ref{dyn_qp}) implies (\ref{eq:exp:qp}). Moreover, using also the commutation relations (\ref{eq:CCR}), we find that for all $1 \leq j,k \leq | \Lambda|$,
\begin{equation}
\langle B^*_k B_j \rangle_{\rho_\alpha} + \delta_{j,k} = \langle B_j B^*_k \rangle_{\rho_\alpha} = \langle B^*_j \psi_{\alpha},  B^*_k \psi_{\alpha} \rangle = ( \alpha_j + 1) \delta_{j,k} \, .
\end{equation}
and therefore, for $\Gamma_{\rho_\alpha}^B$ as in (\ref{Cor_B}), we have that
\begin{equation}
\Gamma_{\rho_\alpha}^B  = \begin{pmatrix} 0 & ( \alpha + \idty) \\ \alpha & 0 \end{pmatrix}
\end{equation}
An application of Lemma~\ref{lem:Cor_calc} yields
\begin{equation}
\Gamma_{\alpha}(t) = V \begin{pmatrix} 0 & (\alpha+\idty) e^{-2i \gamma t} \\  \alpha e^{2 i \gamma t} & 0 \\ \end{pmatrix} V^T.
\end{equation}

A short calculation shows that
\begin{equation}
\frac{1}{2} \begin{pmatrix} \idty & \idty \\ -i \idty & i \idty \end{pmatrix} \begin{pmatrix} 0 & (\alpha+\idty) e^{-2i \gamma t} \\  \alpha e^{2 i \gamma t} & 0 \\ \end{pmatrix} \begin{pmatrix} \idty & -i \idty \\ \idty & i \idty \end{pmatrix}
\end{equation}
can be rewritten as
\begin{equation}
\begin{pmatrix} \alpha & 0 \\ 0 & \alpha \end{pmatrix} \begin{pmatrix} \cos(2 \gamma t) & \sin(2 \gamma t) \\ - \sin(2 \gamma t) & \cos(2 \gamma t) \end{pmatrix}
+ \frac{1}{2} \begin{pmatrix} e^{-2i \gamma t} & 0 \\ 0 & e^{-2i \gamma t} \end{pmatrix} \begin{pmatrix} \idty & i \idty \\ - i \idty & \idty \end{pmatrix}.
\end{equation}
Using the form of $V$ and $V^{-1}$ in (\ref{def:V}) and (\ref{def:Vinv}) this gives
\begin{eqnarray}
\Gamma_{\alpha}(t) & = & \begin{pmatrix} \mathcal{O} \gamma^{-1/2} & 0 \\ 0 &  \mathcal{O} \gamma^{1/2} \end{pmatrix}
\begin{pmatrix} \alpha \cos(2 \gamma t) & \alpha \sin(2 \gamma t) \\ - \alpha \sin(2 \gamma t) & \alpha \cos(2 \gamma t) \end{pmatrix}
\begin{pmatrix}  \gamma^{-1/2} \mathcal{O}^T & 0 \\ 0 &  \gamma^{1/2}  \mathcal{O}^T \end{pmatrix}  \nonumber \\
& & \mbox{} + \frac{1}{2} \begin{pmatrix} \mathcal{O} \gamma^{-1/2} & 0 \\ 0 &  \mathcal{O} \gamma^{1/2} \end{pmatrix}
\begin{pmatrix}  e^{-2i \gamma t} & i  e^{-2i \gamma t} \\ - i  e^{-2i \gamma t}  &  e^{-2i \gamma t} \end{pmatrix}
\begin{pmatrix}  \gamma^{-1/2} \mathcal{O}^T & 0 \\ 0 &  \gamma^{1/2}  \mathcal{O}^T \end{pmatrix}.
\end{eqnarray}
This is the same as (\ref{gen_cor_alpha}), due to
\begin{equation}
\mathcal{O} \gamma^{-1/2} \alpha \cos(2 \gamma t) \gamma^{-1/2} \mathcal{O}^T = \mathcal{O} \alpha \mathcal{O}^T  h_\Lambda^{-1/2} \cos(2 t h_\Lambda^{-1/2})
\end{equation}
and similar consequences of the functional calculus.
\end{proof}

We can now present the proof of Theorem \ref{thm:Corr:pq}.
\begin{proof}(of Theorem \ref{thm:Corr:pq})
We first consider the most singular case, and then comment on how the remaining cases follow similarly.

Lemma~\ref{lem:Cor_Mat} demonstrates that
\begin{equation} \label{qq_cor}
\langle \tau_t(q_x) q_y \rangle_{\alpha} = \langle \delta_x, \mathcal{O} \alpha \mathcal{O}^T h_\Lambda^{-1/2} \cos( 2 t h_\Lambda^{1/2}) \delta_y \rangle + \frac{1}{2} \langle \delta_x, h_\Lambda^{-1/2} e^{- 2it h_\Lambda^{1/2}} \delta_y \rangle.
\end{equation}
Using (\ref{eq:eigenCorr}), it is clear that
\begin{equation}
\mathbb{E} \left( \sup_{t \in \mathbb{R}} \left| \langle \delta_x, h_\Lambda^{-1/2} e^{- 2it h_\Lambda^{1/2}} \delta_y \rangle \right|^s \right) \leq Ce^{- \eta|x-y|},
\end{equation}
and so we need only estimate the first term in (\ref{qq_cor}) above. If the eigenvalues of $h_{\Lambda}$ are non-degenerate, which holds almost surely, we can write
\begin{equation} \label{eq:efexpansion}
\mathcal{O}\alpha\mathcal{O}^T=\sum_{j=1}^{|\Lambda|}\alpha_j\chi_{\{\gamma_j^2\}}(h_\Lambda)=\sum_{a=0}^{\|\alpha\|_\infty}a \chi_{J(a)}(h_\Lambda).
\end{equation}
Here $\chi_{\{\gamma_j^2\}}(h_\Lambda)$ is the projection onto the eigenvector of $h_\Lambda$ to $\gamma_j^2$, and $\chi_{J(a)}(h_\Lambda)$ is the spectral projection for $h_\Lambda$ onto $J(a):=\{\gamma_j^2: \alpha_j =a\}$.
Given this, one immediately sees that
\begin{equation}
\left| \langle \delta_x, \mathcal{O} \alpha \mathcal{O}^T h_\Lambda^{-1/2} \cos( 2 t h_\Lambda^{1/2}) \delta_y \rangle \right|^s \leq
\sum_{a=0}^{\| \alpha \|_{\infty}} a^s \left| \langle \delta_x, h_\Lambda^{-1/2} \chi_{J(a)}(h_\Lambda) \cos( 2 t h_\Lambda^{1/2}) \delta_y \rangle \right|^s
\end{equation}
and therefore an application of (\ref{eq:eigenCorr}) again implies
\begin{equation}
\mathbb{E} \left( \sup_{t \in \mathbb{R}} \left| \langle \delta_x, \mathcal{O} \alpha \mathcal{O}^T h_\Lambda^{-1/2} \cos( 2 t h_\Lambda^{1/2}) \delta_y \rangle \right|^s \right)
\leq C \| \alpha \|_{\infty}^s ( \| \alpha \|_{\infty} +1) e^{- \eta |x-y|}.
\end{equation}
This completes the argument for the most singular correlations.

As is clear from Lemma~\ref{lem:Cor_Mat}, the other correlations in the $2\times2$-matrix $(\Gamma_\alpha(t))_{xy}$ produce similar terms. These terms require bounds on eigenfunction correlators less singular than (\ref{eq:eigenCorr}), in the sense that the term $h_{\Lambda}^{-1/2}$ is replaced by $\idty$ or $h_{\Lambda}^{1/2}$. They can be bounded by (\ref{eq:eigenCorr}) due to the uniform spectral bound (\ref{eq:spec}). For example, associating $\tilde{g}(x) := x^{1/2} g(x)$ with each $g$ such that $|g|\le 1$, one gets
\begin{equation}
\E \left(\sup_{|g|\le 1} |\langle \delta_x, g(h_{\Lambda}) \delta_y \rangle|^s \right) \le (4d\lambda+\frac{k_{max}}{2})^{s/2} \E \left( \sup_{|g|\le 1} |\langle \delta_x, h_{\Lambda}^{-1/2} g(h_\Lambda) \delta_y \rangle|^s \right),
\end{equation}
and similar for $\E \left(\sup_{|g|\le 1} |\langle \delta_x, h_{\Lambda}^{1/2} g(h_{\Lambda}) \delta_y \rangle|^s \right)$.

Finally, the bound in terms of $\|\alpha\|_1$ in (\ref{eq:corr:qp:2}) follows by directly considering the middle term in (\ref{eq:efexpansion}).

This completes the proof of Theorem~\ref{thm:Corr:pq}.

\end{proof}

%
%

\section{Proof of Theorem \ref{thm:quenched}}\label{sec:Proof:Quenched}

By Lemma~\ref{lem:Cor_calc}, the $qp$-correlations (\ref{eq:quenchcor}) corresponding to the time-evolution of any initially chosen
density matrix $\rho$ can be evaluated as:
\begin{eqnarray}
\tilde{\Gamma}_{\rho}(t) & = & \left\langle \begin{pmatrix} q \\ p \end{pmatrix} (q^T, p^T) \right\rangle_{\rho_t} - \left\langle \begin{pmatrix} q \\ p \end{pmatrix} \right\rangle_{\rho_t}
\left\langle  (q^T, p^T) \right\rangle_{\rho_t} \\ \nonumber
& = & V \begin{pmatrix} e^{-2i t \gamma} & 0 \\ 0 & e^{2it \gamma} \end{pmatrix} \Gamma_\rho^B \begin{pmatrix} e^{-2i t \gamma} & 0 \\ 0 & e^{2it \gamma} \end{pmatrix} V^T
\end{eqnarray}
where $V$ and $\Gamma_{\rho}^B$ are as in (\ref{def:V}) and (\ref{Cor_B}), respectively. For our arguments here, we prefer to re-express this in terms of
the time-zero $qp$-correlations, i.e., we write
\begin{equation} \label{3mats}
\tilde{\Gamma}_{\rho}(t) = V_t \Gamma_{\rho} V_t^T,
\end{equation}
where we have set
\begin{equation}
V_t = V \begin{pmatrix} e^{-2i t \gamma} & 0 \\ 0 & e^{2it \gamma} \end{pmatrix} V^{-1} = \begin{pmatrix} \cos(2 t h_\Lambda^{1/2}) & h_\Lambda^{-1/2} \sin(2 t h_\Lambda^{1/2}) \\ - h_\Lambda^{1/2} \sin(2 t h_\Lambda^{1/2}) & \cos(2 t h_\Lambda^{1/2}) \end{pmatrix}.
\end{equation}
The final equality is a direct calculation.

By (\ref{3mats}) one has, for any $x,y \in \Lambda$,
\begin{equation} \label{block_entries}
(\tilde{\Gamma}_{\rho}(t))_{xy}  = \sum_{z,z' \in \Lambda} (V_t)_{xz} (\Gamma_{\rho})_{zz'} (V_t^T)_{z'y}
\end{equation}
with $2\times 2$-matrices $(\tilde{\Gamma}_{\rho}(t))_{xy}$, $(V_t)_{xz}$,  $(\Gamma_{\rho})_{zz'}$ and $(V_t^T)_{z'y}$ defined according to (\ref{eq:block}).

Similar to the arguments in Section~\ref{sec:Proof:cluster}, our basic assumption (\ref{eq:eigenCorr}),  guarantees the existence of $\tilde{C}< \infty$, depending on $d$, $\lambda$, $s$ and $k_{max}$, such that
\begin{equation} \label{base_est}
\mathbb{E} \left( \sup_{t \in \mathbb{R}} \|(V_t)_{xy}\|^s \right) \leq \tilde{C} e^{- \eta|x-y|}
\end{equation}
for every rectangular box $\Lambda$ and all $x,y \in \Lambda$. It is clear that the same bound also holds for $V^T_t$.

For the product state $\rho= \otimes_{\ell=1}^M\rho_\ell$ in (\ref{eq:product}), the $qp$-correlation matrix is the direct sum of the correlation matrices of the factors $\rho_{\ell}$. More precisely, for $x,y \in \Lambda$,
\begin{equation}
(\Gamma_{\rho})_{xy} = \left\{  \begin{array}{ll}
   (\Gamma_{\rho_\ell})_{xy} & \mbox{if $x,y \in \Lambda_\ell$ for some $\ell$,}  \\
    0 & \hbox{otherwise.}
  \end{array}
\right.
\end{equation}
Thus, by condition (\ref{AssumLocal})
\begin{equation}\label{eq:Gammapq-2-Term}
\mathbb{E}\left(\|(\Gamma_{\rho})_{xy}\|^s\right)\leq C' e^{-\eta'|x-y|}
\end{equation}
for all $x,y \in \Lambda$.

For all $x,y \in \Lambda$ we have that
\begin{eqnarray} \label{Holder3}
\mathbb{E} \left( \sup_{t \in \mathbb{R}} \|(\tilde{\Gamma}_{\rho}(t))_{xy}\|^{s/3} \right) & \leq &  \sum_{z,z'\in \Lambda}
\mathbb{E} \left( \sup_{t \in \mathbb{R}} \|(V_t)_{xz}\|^s \right)^{1/3} \times \nonumber \\
& \mbox{ } & \quad \times \mathbb{E} \left( \|(\Gamma_{\rho})_{z z'}\|^s \right)^{1/3} \mathbb{E}
\left(\sup_{t \in \mathbb{R}}\|(V_t^T)_{z' y}\|^s \right)^{1/3},
\end{eqnarray}
where we have used (\ref{block_entries}) and H\"older's inequality. Thus (\ref{base_est}) and (\ref{eq:Gammapq-2-Term}) yield
\begin{eqnarray} \label{postHold}
\mathbb{E} \left( \sup_{t \in \mathbb{R}} \|(\tilde{\Gamma}_{\rho}(t))_{xy}\|^{s/3} \right) & \leq &  \tilde{C}^{2/3} C'^{\frac{1}{3}} \sum_{z,z' \in \Lambda}
e^{- \eta |x-z| /3} e^{- \eta' |z-z'|/3} e^{- \eta |z'-y|/3} \nonumber \\
& \leq & C'' e^{- \tilde{\eta}|x-y|}.
\end{eqnarray}
Here one may take $\tilde{\eta}=\frac{1}{6}\min \{\eta,\eta'\}$ and
\begin{equation} \label{eq:explconst}
C''= \tilde{C}^{2/3} (C')^{1/3} \left(\frac{2}{1-e^{-\tilde{\eta}}}\right)^{2d} .
\end{equation}

\section{Applications of Theorem~\ref{thm:quenched}} \label{sec:applications}

(i) As a first application of Theorem~\ref{thm:quenched} we consider the case where the factors in the product state (\ref{eq:product}) are thermal states of the subsystems. Assumption (\ref{AssumLocal}) in Theorem~\ref{thm:quenched} is then a consequence of Theorem~6.1 in \cite{NSS} on the position-momentum correlations of thermal states of oscillator systems. We start by stating an improved version of this result, which makes the temperature dependence of the bound explicit, a fact of some interest by itself which was not addressed in \cite{NSS}. In Appendix~\ref{sec:tempdep} we sketch the modifications of the argument in \cite{NSS} needed to get this improvement.

While more general systems are considered in \cite{NSS}, we will continue to focus on the model (\ref{eq:Ham}).
Here we only require the general assumption (\ref{eq:distr}) on the distribution of the $k_x$, and, in particular, we do not
require to be in a fully localized regime as needed for (\ref{eq:eigenCorr}).

\begin{thm}\label{thm:thermal:NSS}
For a rectangular box $\Lambda \subset \Z^d$ and $\beta\in(0,\infty)$, let $H_\Lambda$ be given by (\ref{eq:Ham}), $\rho_{\beta} = e^{-\beta H_{\Lambda}}/ \Tr [e^{-\beta H_{\Lambda}}]$ its thermal states, and $\Gamma_{\rho_\beta} = \Gamma_{\rho_\beta}(0,0)$ their static position-momentum correlation matrices.

There exist $C<\infty$ and $\mu>0$, dependent on $d$, $\lambda$ and the distribution of the random variables $k_x$, but independent of $\Lambda$ and $\beta$, such that
\begin{equation}\label{eq:NSS}
\mathbb{E}\left(\|(\Gamma_{\rho_{\beta}})_{xy}\|^{\frac{1}{2}}\right)\leq C \max\left\{1,\frac{1}{\beta}\right\}\ e^{-\mu|x-y|}
\end{equation}
for all $x,y \in \Lambda$.
\end{thm}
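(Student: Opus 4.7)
My plan is to combine the Bogoliubov reduction of Section~\ref{sec:Reduction} with a Mittag-Leffler expansion of $\coth(\beta h_\Lambda^{1/2})$ so that the thermal correlation matrix becomes an explicit sum of resolvents of the effective Anderson Hamiltonian $h_\Lambda$ at real negative ``Matsubara'' energies $-(\pi n/\beta)^2$, with the $1/\beta$ prefactor exposed termwise. The half-fractional moment bounds for these resolvents and for the singular Green's function $h_\Lambda^{-1}$ established in Appendix~A of \cite{NSS} under the single assumption (\ref{eq:distr}) will then control each term in the expansion.

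The first step reduces the problem to bounding matrix elements of functions of $h_\Lambda$. Applying Lemma~\ref{lem:Cor_calc} with $t=t'=0$, together with the standard free-boson thermal two-point function
\[
\Gamma_{\rho_\beta}^B = \begin{pmatrix} 0 & N+\idty \\ N & 0 \end{pmatrix}, \qquad N = \diag\!\left(\tfrac{1}{e^{2\beta\gamma_k}-1}\right),
\]
and the explicit form of $V$ in (\ref{def:V}), a short matrix computation (parallel to the proof of Lemma~\ref{lem:Cor_Mat}, but without $t$-dependence) yields the four $|\Lambda|\times|\Lambda|$ blocks of $\Gamma_{\rho_\beta}$: the $qq$- and $pp$-blocks equal $\tfrac{1}{2}h_\Lambda^{-1/2}\coth(\beta h_\Lambda^{1/2})$ and $\tfrac{1}{2}h_\Lambda^{1/2}\coth(\beta h_\Lambda^{1/2})$ respectively, while the $qp$- and $pq$-blocks reduce to $\pm\tfrac{i}{2}\idty$ and contribute only Kronecker deltas in $x,y$, trivially satisfying (\ref{eq:NSS}).

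For the diagonal blocks I would employ the Mittag-Leffler identity
\[
\tfrac{1}{\gamma}\coth(\beta\gamma) \;=\; \tfrac{1}{\beta\gamma^{2}} \;+\; \tfrac{2}{\beta}\sum_{n=1}^{\infty}\tfrac{1}{\gamma^{2}+(\pi n/\beta)^{2}},
\]
which rewrites the $qq$-block as $\tfrac{1}{2\beta}h_\Lambda^{-1} + \tfrac{1}{\beta}\sum_{n\geq 1}(h_\Lambda+(\pi n/\beta)^2)^{-1}$. For the $pp$-block, the bounded decomposition $\gamma\coth(\beta\gamma) = \gamma + 2\gamma/(e^{2\beta\gamma}-1)$, together with the pointwise estimate $\sup_{\gamma\geq 0} 2\gamma/(e^{2\beta\gamma}-1) \leq 1/\beta$, separates off a $\gamma$-part whose matrix-element decay is handled by the Appendix~A bound on $h_\Lambda^{1/2}$, from a $1/\beta$-controlled remainder admitting again a Matsubara-type resolvent series. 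Applying the subadditivity $|a+b|^{1/2}\leq|a|^{1/2}+|b|^{1/2}$ distributes the half-fractional moment across each series.

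The main obstacle is ensuring that the Matsubara sum is bounded uniformly in $\beta$. For small Matsubara index $n$, the energy $-(\pi n/\beta)^2$ lies close to the spectrum of $h_\Lambda$, and one must invoke the NSS half-moment bound
$\mathbb{E}(|\langle\delta_x,(h_\Lambda+E)^{-1}\delta_y\rangle|^{1/2}) \leq C\,e^{-\mu|x-y|}$,
valid uniformly for $E\geq 0$, along with the singular Green's function bound on $h_\Lambda^{-1}$ itself, both established in Appendix~A of \cite{NSS} under (\ref{eq:distr}) alone. For large $n$, the deterministic Combes-Thomas estimate
$|\langle\delta_x,(h_\Lambda+(\pi n/\beta)^2)^{-1}\delta_y\rangle| \leq C\beta^2 n^{-2}\,e^{-c(\pi n/\beta)|x-y|}$
provides absolute summability without any randomness input. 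Treating the regimes $\beta\leq 1$ (where the $1/\beta$ prefactor naturally governs) and $\beta\geq 1$ separately (where $\coth(\beta h_\Lambda^{1/2})\to\idty$ and the required estimates reduce to the direct NSS bounds on $h_\Lambda^{\pm 1/2}$), one extracts the stated prefactor $\max\{1,1/\beta\}$ and closes the argument.
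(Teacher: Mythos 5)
Your reduction of $\Gamma_{\rho_\beta}$ to the blocks $\tfrac12 h_\Lambda^{\mp 1/2}\coth(\beta h_\Lambda^{1/2})$ and $\pm\tfrac{i}{2}\delta_{xy}$ agrees with what the paper imports from Lemma~5.4 of \cite{NSS}, but the Matsubara strategy has a gap that I do not see how to close. After you distribute the $\tfrac12$-fractional moment termwise over the series
$\tfrac{1}{2\beta}h_\Lambda^{-1}+\tfrac{1}{\beta}\sum_{n\ge1}(h_\Lambda+(\pi n/\beta)^2)^{-1}$
via $|\sum_n a_n|^{1/2}\le\sum_n|a_n|^{1/2}$, the resulting series of expectations diverges: the best available bound on each term is
$\E\bigl(|\langle\delta_x,(h_\Lambda+E_n)^{-1}\delta_y\rangle|^{1/2}\bigr)\lesssim\min\{1,E_n^{-1/2}\}=\min\{1,\beta/(\pi n)\}$,
and for $x=y$ this is essentially sharp (the diagonal resolvent entries are bounded below by $c/(1+E_n)$), so $\sum_n\E(|a_n|^{1/2})$ contains a harmonic tail. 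The point is that the resolvent terms decay like $1/n^2$, which is summable, but their square roots decay only like $1/n$, which is not; concavity of $t\mapsto t^{1/2}$ is exactly the wrong direction to push inside an infinite sum. Combes--Thomas rescues the off-diagonal, small-$\beta$ case only, and even there the rate for a lattice operator grows like $\log E_n$ rather than $\sqrt{E_n}$. Moreover your fallback for $\beta\ge1$, namely $\coth(\beta h_\Lambda^{1/2})\approx\idty$, fails for this gapless model: for large $\Lambda$ the spectrum of $h_\Lambda$ reaches down to energies far below $1/\beta^2$ with high probability, and on those modes $\coth(\beta\gamma)-1\approx(\beta\gamma)^{-1}$ is large; controlling the remainder $\tfrac1\beta h_\Lambda^{-1}\bigl(\beta h_\Lambda^{1/2}\tfrac{2}{e^{2\beta h_\Lambda^{1/2}}-1}\bigr)$ requires an off-diagonal bound on the doubly singular correlator $h_\Lambda^{-1}g(h_\Lambda)$, which is not among the inputs you invoke.

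For comparison, the paper avoids the infinite resummation entirely. It inserts $\idty=\sum_z|\delta_z\rangle\langle\delta_z|$ and applies Cauchy--Schwarz to factor $\E(|\langle\delta_x,h_\Lambda^{\pm1/2}\varphi(h_\Lambda)\delta_y\rangle|^{1/2})$ into $\E(|\langle\delta_x,h_\Lambda^{\pm1/2}\delta_z\rangle|)^{1/2}$ (Proposition~A.3(c) of \cite{NSS}) times $\E(|\langle\delta_z,\varphi(h_\Lambda)\delta_y\rangle|)^{1/2}$, and then splits $\varphi(h_\Lambda)$ spectrally at a fixed $E_0>0$: on $[0,E_0]$ it uses band-edge localization together with $|\varphi(t)|\le\tfrac{\beta\sqrt{E_0}+1}{\beta}t^{-1/2}$, and on $(E_0,\infty)$ it uses a contour-integral representation whose constant is controlled by $\coth^2(\beta\sqrt{E_0})\le(1+\tfrac{1}{\beta\sqrt{E_0}})^2$. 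The $\max\{1,1/\beta\}$ prefactor is read off from these two elementary $\coth$ estimates, with no infinite sum to control. If you want to keep a Matsubara-type argument, you would have to resum the tail of the series pathwise (deterministically) before taking fractional moments, which essentially reproduces the paper's high-energy contour bound.
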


In Section~\ref{sec:tempdep} below we will briefly discuss how the $\beta$ dependence in (\ref{eq:NSS}) can be extracted from the bounds provided in \cite{NSS}.

Let us consider the quantum quench with respect to the decomposition $\Lambda =\biguplus_{\ell=1}^M\Lambda_\ell$, and assume that the local states are the thermal states of $H_{\Lambda_\ell}$ with inverse temperatures $\beta_\ell$, $\ell=1,\ldots,M$, i.e.,
\begin{equation}\label{eq:case1}
\rho_{\ell,\beta_\ell}=\frac{e^{-\beta_\ell H_{\Lambda_\ell}}}{\Tr[e^{-\beta_\ell H_{\Lambda_\ell}}]}.
\end{equation}
Then condition (\ref{AssumLocal}) is satisfied by Theorem 6.1 when applied to each of the local Hamiltonians.
In this case, and for the remainder of this section, we will further assume that (\ref{eq:eigenCorr}) holds with $s = 1/2$.
As is discussed in Section~\ref{sec:model}, this will be the case for the model we are considering when either $d=1$ or
$d \geq 1$ and the disorder is large. Now let
\begin{equation}
\rho_{\beta_1,\ldots,\beta_M}:=\bigotimes_{\ell=1}^M  \rho_{\ell,\beta_\ell}.
\end{equation}
Theorem \ref{thm:quenched} implies that
\begin{equation}\label{eq:dyn:thermal}
\mathbb{E}\left(\sup_t\|(\tilde{\Gamma}_{\rho_{\beta_1,\ldots,
\beta_M}}(t))_{xy}\|^\frac{1}{6}
\right)\leq C' \max \left\{ 1, \beta^{-1/3} \right\}  e^{-\tilde{\eta}|x-y|}
\end{equation}
for all $x,y\in\Lambda$. Here $\beta = \min_{\ell} \beta_{\ell}$ and $\tilde{\eta}=\frac{1}{6}\min\{\eta,\mu\}$ where $\eta$ and $\mu$ are as in (\ref{eq:eigenCorr}) and (\ref{eq:NSS}), respectively, and $C'$ is independent of $\Lambda$, $\beta$ and the number of subsystems $M$.

(ii) Next we discuss the case where the initial state is a product of eigenstates of the subsystems.
Fix a nonnegative integer $N < \infty$, and let $\alpha_{\ell}\in \mathbb{N}_0^{|\Lambda_\ell|}$ with $\|\alpha_{\ell}\|_\infty\leq N$ for all $\ell=1,\ldots,M$. Consider any family  $\rho_{\alpha_{\ell}} = |\psi_{\alpha_\ell} \rangle \langle \psi_{\alpha_\ell}|$, $\ell=1,\ldots,M$, of eigenstates of $H_{\Lambda_\ell}$ corresponding to the excitation vectors $\alpha_{\ell}$, with $\psi_{\alpha_\ell}$ given by (\ref{def:psi}) when used for the subsystem $\Lambda_{\ell}$ . Theorem \ref{thm:Corr:pq} implies that condition (\ref{AssumLocal}) is  satisfied for all $\ell=1,\ldots,M$, in particular, there exist constants $C'>0$  and $\eta<\infty$  such that
\begin{equation}
\mathbb{E}\left(\|(\Gamma_{\rho_{\alpha_\ell}})_{xy}\|^\frac{1}{2}\right)\leq C'(1+\|\alpha_\ell\|_\infty)^\frac{3}{2} e^{-\eta|x-y|},
\end{equation}
for all $\ell$ and all $x,y\in\Lambda_\ell$. Here $\eta$ is as in (\ref{eq:eigenCorr}), which we have again taken to hold with $s = 1/2$,
and $C'$ is independent of $\Lambda$, $N$, and of $M$.
With $\rho_{\alpha}$ the corresponding product, i.e. $\rho_{\alpha}=\otimes_{\ell=1}^M\rho_{\alpha_\ell}$,
an application of Theorem \ref{thm:quenched} shows that
\begin{equation}\label{eq:dyn:eigen}
\mathbb{E}\left(\sup_t\|(\tilde{\Gamma}_{\rho_{\alpha}}(t))_{xy}\|^\frac{1}{6}\right)\leq \tilde{C} (1+N)^{\frac{1}{2}} e^{-\frac{\eta}{6}|x-y|}
\end{equation}
for all $x,y\in\Lambda$. Here again $\tilde{C}<\infty$ is independent of $\Lambda$, the number $M$ of decompositions, and of the highest excitation $N$,
and moreover, $\eta$ is as above.

(iii)  One can combine the cases (i) and (ii) and consider a product state $\rho$ as in (\ref{eq:product}) where each local state is either a thermal state or an eigenstate of the Hamiltonian $H_{\Lambda_\ell}$. The arguments in cases (i) and( ii) above provide a proof of the following result, which summarizes all the cases considered so far.
\begin{cor}\label{Cor:Corr:Mix}
Fix $\beta>0$ and $N<\infty$. Let $\rho=\otimes_{\ell=1}^M\rho_\ell$ where each of the local states $\rho_\ell$ is either a thermal state of $H_{\Lambda_\ell}$ with inverse temperature $\beta_\ell\in[\beta,\infty)$, or an eigenstate associated with an excitation vector $\alpha_\ell$ such that $\|\alpha_\ell\|_\infty\le N$. If $\tilde{\eta}=\frac{1}{6}\min\{\eta,\mu\}$, where $\eta$ is as in (\ref{eq:eigenCorr}) with $s=1/2$ and $\mu$ is as in (\ref{eq:NSS}), then there exists $C<\infty$ such that
\begin{equation}\label{eq:dyn:mix}
\mathbb{E}\left(\sup_t\|(\tilde{\Gamma}_{\rho}(t))_{xy}\|^\frac{1}{6}
\right)\leq C \max\left\{(1+N)^\frac{3}{2}, \frac{1}{\beta} \right\}^{\frac{1}{3}} e^{-\tilde{\eta}|x-y|}
\end{equation}
for all $x,y\in\Lambda$. Here $C$ is independent of $\Lambda$, $N$, $M$ and of $\beta$.
\end{cor}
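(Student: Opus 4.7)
The plan is to deduce the corollary directly from Theorem~\ref{thm:quenched} after verifying that its hypothesis (\ref{AssumLocal}), with $s=1/2$, is satisfied uniformly by the hybrid family of local states introduced in the statement. The existing work (cases (i) and (ii) above) treats the purely thermal and purely eigenstate situations separately; what remains is to show that the two local bounds can be merged into a single exponential estimate with a joint prefactor and a joint rate.

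First I would treat the two types of factors in the product $\rho=\bigotimes_{\ell=1}^M \rho_\ell$ separately. For each $\ell$ with $\rho_\ell = \rho_{\ell,\beta_\ell}$ a thermal state, Theorem~\ref{thm:thermal:NSS} applied to $H_{\Lambda_\ell}$ gives
\begin{equation*}
\E\bigl(\|(\Gamma_{\rho_\ell})_{xy}\|^{1/2}\bigr)\le C_1\max\{1,1/\beta_\ell\}\,e^{-\mu|x-y|}\le C_1\max\{1,1/\beta\}\,e^{-\mu|x-y|},
\end{equation*}
using $\beta_\ell\ge\beta$. For each $\ell$ with $\rho_\ell=|\psi_{\alpha_\ell}\rangle\langle\psi_{\alpha_\ell}|$ an eigenstate, Theorem~\ref{thm:Corr:pq} at $t=0$ with $s=1/2$ gives
\begin{equation*}
\E\bigl(\|(\Gamma_{\rho_\ell})_{xy}\|^{1/2}\bigr)\le C_2(1+\|\alpha_\ell\|_\infty)^{3/2}\,e^{-\eta|x-y|}\le C_2(1+N)^{3/2}\,e^{-\eta|x-y|},
\end{equation*}
using $\|\alpha_\ell\|_\infty\le N$.

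Next I would combine both estimates into a single one: with $\eta':=\min\{\eta,\mu\}$ and
\begin{equation*}
\widetilde C := \max\{C_1,C_2\}\cdot\max\bigl\{(1+N)^{3/2},\,1/\beta\bigr\},
\end{equation*}
hypothesis (\ref{AssumLocal}) is satisfied with exponent $s=1/2$, decay rate $\eta'$ and prefactor $\widetilde C$, uniformly over all $\ell$ and for either type of local state. Theorem~\ref{thm:quenched} then applies and yields
\begin{equation*}
\E\!\left(\sup_{t\in\R}\|(\tilde{\Gamma}_\rho(t))_{xy}\|^{1/6}\right)\le \widetilde C^{\,1/3}\,C''\,e^{-\tilde\eta|x-y|},
\end{equation*}
with $\tilde\eta=\tfrac{1}{6}\min\{\eta,\eta'\}=\tfrac{1}{6}\min\{\eta,\mu\}$ and $C''$ as in (\ref{eq:explconst}), which is independent of $\Lambda$, $N$, $M$ and $\beta$. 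Since $\widetilde C^{\,1/3}$ produces exactly the prefactor $\max\{(1+N)^{3/2},1/\beta\}^{1/3}$ (times an $N$- and $\beta$-independent constant), this matches the claim (\ref{eq:dyn:mix}).

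The argument is essentially a routine bookkeeping exercise once the two input theorems are in place, so I do not expect a genuine obstacle. The only point worth emphasizing is that the same moment exponent $s=1/2$ appears in all three ingredients, namely the hypothesis (\ref{eq:eigenCorr}) on singular eigenfunction correlators, the conclusion of Theorem~\ref{thm:thermal:NSS}, and the instance of Theorem~\ref{thm:Corr:pq} used here, so that H\"older's inequality inside the proof of Theorem~\ref{thm:quenched} can be invoked with a common exponent, producing the final exponent $s/3=1/6$ on the left-hand side of (\ref{eq:dyn:mix}).
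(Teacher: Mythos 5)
Your proposal is correct and follows essentially the same route as the paper, which simply observes that the arguments of cases (i) and (ii) combine: verify hypothesis (\ref{AssumLocal}) with $s=1/2$ for each local factor via Theorem~\ref{thm:thermal:NSS} (thermal) or Theorem~\ref{thm:Corr:pq} at $t=0$ (eigenstate), take the common prefactor $\max\{(1+N)^{3/2},1/\beta\}$ and rate $\min\{\eta,\mu\}$, and apply Theorem~\ref{thm:quenched}. Your explicit bookkeeping, including the observation that $(1+N)^{3/2}\ge 1$ so the combined maximum dominates both individual prefactors, is exactly what the paper leaves implicit.
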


(iv) In the extreme case where each subsystem consists of only one site, i.e., $M=|\Lambda|$, the initial Hamiltonian $H_{0,\Lambda}$ is a system of non-interacting harmonic oscillators over the $d$ dimensional lattice $\Lambda$,
\begin{equation}
H_{0,\Lambda}=\sum_{x\in\Lambda} H_{\{x\}}\otimes \idty_{\Lambda\setminus \{x\}},
\end{equation}
where $H_{\{x\}}$ is the one dimensional harmonic oscillator
\begin{equation}\label{eq:Hx}
H_{\{x\}}=p_x^2+\frac{k_x}{2}q_x^2.
\end{equation}
The eigenstates of $H_{\{x\}}$ are known to be the Hermite functions
\begin{equation}\label{eq:phi-nu}
\phi_{n_x}(q_x)=\frac{1}{\sqrt{2^{n_x} n_x!}}\left(\frac{\sqrt{2k_x}}{2\pi}\right)^{\frac{1}{4}}
H_{n_x}(\sqrt[4]{\frac{k_x}{2}}q_x)
e^{-\frac{\sqrt{2k_x}}{4}q_x^2},
\end{equation}
where $n_x\in\mathbb{N}_0$ the excitation number at vertex $x\in\Lambda$, $H_{n_x}(\cdot)$  is the Hermite polynomial of degree $n_x$. In this special case, the following corollary improves on the bound in (\ref{eq:dyn:eigen}) for the correlations of the dynamics of the product state
\begin{equation}\label{eq:Corr:rho:extreme}
\rho=\bigotimes_{x\in\Lambda}\rho_{n_x}, \text{ where }
\rho_{n_x}=|\phi_{n_x}
\rangle\langle\phi_{n_x}|.
\end{equation}

\begin{cor}\label{cor:extreme:M=n}
Let $\rho$ be as in (\ref{eq:Corr:rho:extreme}) and let $N= \max_x n_x$. If $\eta$ is as in (\ref{eq:eigenCorr}) with $s=1/2$, then there exists $C<\infty$, independent of $\Lambda$ and $N$, such that
\begin{equation} \label{triv_dec}
\mathbb{E}\left(\sup_t\|(\tilde{\Gamma}_{\rho}(t))_{xy}\|^{\frac{1}{6}}\right)\leq C (1+2N)^{\frac{1}{6}}e^{-\frac{\eta}{6}|x-y|}
\end{equation}
for all $x,y\in\Lambda$.
\end{cor}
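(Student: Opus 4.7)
The plan is to exploit the fact that the product-over-sites structure of $\rho$ makes $\Gamma_\rho$ block-diagonal with $2\times 2$ blocks, and that the single-site diagonal blocks can be evaluated explicitly, giving much sharper control than the generic bound from Theorem~\ref{thm:Corr:pq}. Since $(\Gamma_\rho)_{zz'}=0$ whenever $z\neq z'$, the identity (\ref{block_entries}) collapses to
\[
(\tilde{\Gamma}_\rho(t))_{xy}=\sum_{z\in\Lambda}(V_t)_{xz}\,(\Gamma_\rho)_{zz}\,(V_t^T)_{zy}.
\]

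The first step is to compute $(\Gamma_\rho)_{zz}$ using the standard Hermite-function identities. Writing $\gamma_z:=\sqrt{k_z/2}$, one has $\langle q_z^2\rangle_{\phi_{n_z}}=(2n_z+1)/(2\gamma_z)$ and $\langle p_z^2\rangle_{\phi_{n_z}}=(2n_z+1)\gamma_z/2$, while $\langle q_z\rangle=\langle p_z\rangle=0$ and $\langle q_z p_z\rangle=-\langle p_z q_z\rangle=i/2$. Hence
\[
\|(\Gamma_\rho)_{zz}\|\leq(2n_z+1)\,R_z,\qquad R_z:=(2\gamma_z)^{-1}+\gamma_z/2+1/2,
\]
and the essential gain is that the dependence on $n_z$ is now \emph{linear}, as opposed to the $(1+n_z)^{3/2}$ that would follow from applying Theorem~\ref{thm:Corr:pq} in each local block. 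Since the density $\nu$ is bounded on $[0,k_{\max}]$ we have $\mathbb{E}[k_z^{-1/4}]\leq\|\nu\|_\infty\int_0^{k_{\max}}k^{-1/4}\,dk<\infty$, and therefore there exists a finite $C_1$, uniform in $z$, with $\mathbb{E}[R_z^{1/2}]\leq C_1$.

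The rest of the argument follows the template of Section~\ref{sec:Proof:Quenched}, but with the double sum over $(z,z')$ restricted to the diagonal. Using subadditivity of $x\mapsto x^{1/6}$ on $[0,\infty)$, submultiplicativity of $\|\cdot\|$, the $t$-independence of $\Gamma_\rho$, and H\"older's inequality with three equal exponents, followed by the basic estimate (\ref{base_est}) with $s=1/2$ for the $V_t$-factors, the single-site bound above (yielding a factor $(2N+1)^{1/6}C_1^{1/3}$) for the middle factor, and the convolution estimate on $\mathbb{Z}^d$ used in (\ref{eq:explconst}) for the sum over $z$, one arrives at the stated bound (\ref{triv_dec}). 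The main obstacle is spotting the decomposition $(\Gamma_\rho)_{zz}=(2n_z+1)M_z+K$ that isolates the linear-in-$n_z$ contribution, and ensuring that its random coefficient $M_z$, which diverges as $k_z\to 0$, is integrable to the power $\tfrac12$; this is exactly what boundedness of $\nu$ near the origin provides, and it is also what makes the outer exponent $s/3=1/6$ from Theorem~\ref{thm:quenched} the natural one for this argument.
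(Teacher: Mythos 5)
Your proposal is correct and follows essentially the same route as the paper: both exploit the block-diagonality of $\Gamma_{\rho}$, compute the single-site $2\times 2$ correlation matrix explicitly to obtain the linear dependence on $n_x$, use boundedness of $\nu$ to integrate the inverse powers of $k_x$, and then run the three-factor H\"older/convolution argument from the proof of Theorem~\ref{thm:quenched}. The only cosmetic difference is that the paper bounds $\mathbb{E}\left(\|\Gamma_{n_x}(0)\|\right)$ and applies Jensen to get the half-moment, whereas you estimate $\mathbb{E}\left(R_z^{1/2}\right)$ directly via $\mathbb{E}\left(k_z^{-1/4}\right)$.
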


\begin{proof}
In the current case the local Hamiltonians  $h_{\{x\}}$ reduce to the single numbers $k_x/2$. This means that in the case of an $n_x$ excitation at site $x$,  the correlation matrix from (\ref{gen_cor_alpha}) reduces to the $2\times 2$-matrix
\begin{equation}
\Gamma_{n_x}(0)=\begin{pmatrix}
                  \frac{1}{\sqrt{2}}k_x^{-\frac{1}{2}}(1+2n_x) &
                  \frac{1}{2}i \\
                  -\frac{1}{2}i &
                  \frac{1}{2\sqrt{2}}k_x^{\frac{1}{2}}(1+2n_x) \\
                \end{pmatrix}.
\end{equation}
Since $k_x$ is a random variable with a bounded density $\nu$ and supported on the compact set $[0,k_{\max}]$, one gets
\begin{equation}
\mathbb{E}(\max\{k_x^{\frac{1}{2}},k_x^{-\frac{1}{2}}\})\leq 2\|\nu\|_{\infty}\max\left\{(k_{\max})^\frac{1}{2},(k_{\max})^{\frac{3}{2}}/3\right\}.
\end{equation}
 Hence, there exists $C'<\infty$ such that
\begin{equation}\label{eq:Hx:Gamma0}
\mathbb{E}\left(\|\Gamma_{n_x}(0)\|\right)\leq C' (1+2n_x).
\end{equation}

Given (\ref{eq:Hx:Gamma0}), the full correlation matrix for the product state $\rho = \otimes_x \rho_{n_x}$ satisfies
\begin{equation}\label{eq:Cor:Cqp:2-Term}
\mathbb{E}\left(\|(\Gamma_{\rho})_{xy}\|\right)\leq C'(1+2N)\delta_{x,y}.
\end{equation}
Arguing as in (\ref{Holder3}), see also (\ref{postHold}), we conclude that
\begin{equation}
\mathbb{E}\left(\sup_{t\in\mathbb{R}}
\|(\tilde{\Gamma}_{\rho}(t))_{xy}\|^{\frac{1}{6}}\right) \leq \tilde{C}^\frac{2}{3} \sum_{z,z' \in \Lambda} e^{- \eta|x-z|/3} \mathbb{E}\left(\|(\Gamma_{\rho})_{zz'}\|^\frac{1}{2} \right)^\frac{1}{3} e^{- \eta |z'-y|/3}
\end{equation}
where $\tilde{C}$ is as in (\ref{base_est}) with $s=1/2$. Since Holder and (\ref{eq:Cor:Cqp:2-Term}) imply that
\begin{equation}
\mathbb{E}\left(\|(\Gamma_{\rho})_{zz'}\|^\frac{1}{2} \right) \leq \mathbb{E}\left(\|(\Gamma_{\rho})_{zz'}\| \right)^\frac{1}{2} \leq \sqrt{C'(1+2N)} \delta_{z,z'}
\end{equation}
the claim in (\ref{triv_dec}) now follows as in the end of the proof of Theorem~\ref{thm:quenched}.
\end{proof}

\appendix

\section{Proof of Theorem~\ref{thm:thermal:NSS}} \label{sec:tempdep}

In the following we use and refine several results from \cite{NSS} to prove Theorem~\ref{thm:thermal:NSS}. We start by noting that these results are only formulated for cubes in \cite{NSS}, but that they extend to the rectangular boxes considered here.

By Lemma~5.4 of \cite{NSS},
\begin{equation}
(\Gamma_{\rho_\beta})_{xy} = \frac{1}{2} \begin{pmatrix} \langle \delta_x, h_{\Lambda}^{-1/2} \varphi(h_\Lambda) \delta_y \rangle & i \delta_{xy} \\ -i \delta_{xy} & \langle \delta_x, h_\Lambda^{1/2} \varphi(h_\Lambda) \delta_y \rangle \end{pmatrix},
\end{equation}
where we have set $\varphi(t) = \coth(\beta t^{1/2})$. Thus for (\ref{eq:NSS}) it suffices to show that
\begin{equation} \label{eq:toshow}
\E \left( |\langle \delta_x, h_{\Lambda}^{\pm 1/2} \varphi(h_\Lambda) \delta_y \rangle|^{1/2} \right) \le C \max\left\{1,\frac{1}{\beta}\right\}\ e^{-\mu|x-y|}.
\end{equation}
 Expanding $\idty = \sum_z |\delta_z \rangle \langle \delta_z|$ and an application of H\"older's inequality show that the left hand side of (\ref{eq:toshow}) can be bounded by
\begin{equation} \label{eq:ddd}
\sum_z \left( \E (|\langle \delta_x, h_\Lambda^{\pm 1/2} \delta_z \rangle|) \right)^{1/2} \left( \E (| \langle \delta_z, \varphi(h_\Lambda) \delta_y \rangle |) \right)^{1/2}.
\end{equation}

The two factors in the sum can both be bounded using Proposition~A.3(c) of \cite{NSS} and the method of its proof, respectively. For the first factor we can cite Proposition~A.3(c) directly to conclude the existence of $C_1<\infty$ and $\mu_1>0$ such that
\begin{equation} \label{eq:aaa}
\E (|\langle \delta_x, h_\Lambda^{\pm 1/2} \delta_z \rangle|) \le C_1 e^{-\mu_1 |x-z|}
\end{equation}
for all $x$ and $z$.

To understand the $\beta$-dependence of the second factor, we need to analyze the proof of Proposition~A.3(c) of \cite{NSS}. It requires splitting low and high energies of $h_\Lambda$. At low energies, we can use localization of $h_\Lambda$: Our assumptions yield that there exists $E_0>0$ such that the Green function of $h_{\Lambda}$ has localized $s$-fractional moments in $[0,E_0]$ for all $s\in (0,1)$, e.g.\ \cite{AizenmanWarzel}. By Proposition A.3(b) of \cite{NSS} this implies the existence of $C_2<\infty$ and $\mu_2>0$ such that
\begin{equation} \label{eq:locbound}
\E \left( \sup_{|g|\le 1} |\langle \delta_z, h_{\Lambda}^{-1/2} g(h_\Lambda) \chi_{[0,E_0]}(h_\Lambda) \delta_y \rangle| \right) \le C_2 e^{-\mu_2 |z-y|}.
\end{equation}
Using the elementary bound
\begin{equation}\label{eq:coth01}
|\varphi(t)|\leq \frac{\beta\sqrt{E_0}+1}{\beta}t^{-\frac{1}{2}}
\end{equation}
for all $t\in[0,E_0]$, (\ref{eq:locbound}) gives
\begin{equation} \label{eq:bbb}
\mathbb{E}\left(|\langle \delta_z,\varphi(h_\Lambda)\chi_{[0,E_0]}(h_\Lambda)\delta_y\rangle |\right)\leq C_2 \frac{\beta\sqrt{E_0}+1}{\beta} e^{-\mu_2 |z-y|}.
\end{equation}

We will further prove that there are $C_3<\infty$ and $\mu_3>0$ such that
\begin{eqnarray} \label{eq:ccc}
\mathbb{E}\left(|\langle\delta_z,\varphi(h_\Lambda)\chi_{(E_0,\infty)}(h_\Lambda)\delta_y\rangle|\right) & \leq & C_3 \coth^2(\beta\sqrt{E_0})\ e^{-\mu_3 |z-y|} \\ & \le & C_3 \left( 1 + \frac{1}{\beta \sqrt{E_0}} \right)^2 \ e^{-\mu_3 |z-y|} \nonumber
\end{eqnarray}
for all $z$ and $y$.

Inserting all of (\ref{eq:aaa}), (\ref{eq:bbb}) and (\ref{eq:ccc}) into (\ref{eq:ddd}), ultimately gives the bound (\ref{eq:toshow}).

We still owe the proof of the first claim in (\ref{eq:ccc}). This is done by an analysis of the proof of Proposition~A.3(c) in \cite{NSS}. This proof, see (A.15) in \cite{NSS}, uses that
\begin{equation} \label{eq:contint}
|\langle\delta_z,\varphi(h_\Lambda)\chi_{(E_0,\infty)}(h_\Lambda)\delta_y\rangle| \le C' \int_{\Gamma} |\langle \delta_z, (h_{\Lambda}-\zeta)^{-1} \chi_{(E_0,M]}(h_{\Lambda}) \delta_y \rangle|\,|d\zeta|,
\end{equation}
where $\Gamma$ is the rectangular contour with vertices $E_0\pm i$ and $(M+1)\pm i$, $M$ the a-priori upper bound for $\sigma(h_{\Lambda})$ from (\ref{eq:spec}), and $C'= \max \{|\varphi(\zeta)|: \zeta\in \Gamma\} /(2\pi)$.

Using the elementary bound $|\coth(\zeta)|\leq \coth^2(\RE \zeta)$ for $\zeta \in\mathbb{C}\setminus\{0\}$ one has
\begin{equation}\label{eq:varphibound}
|\varphi(\zeta)| \leq \coth^2(\beta\RE \zeta^\frac{1}{2})=\coth^2\left(\beta\sqrt{\frac{\RE \zeta+|\zeta|}{2}}\right).
\end{equation}
This means that
\begin{equation}\label{eq:CprimeBound}
C' \leq \frac{1}{2\pi}\coth^2\left(\beta\min_{\zeta\in\Gamma}\sqrt{\frac{\RE \zeta+|\zeta|}{2}}\right)
= \frac{1}{2\pi}\coth^2\left(\beta\sqrt{E_0}\right).
\end{equation}

The argument in \cite{NSS} shows that the ($\beta$-independent) integral in (\ref{eq:contint}) is bounded by $C'' e^{-\mu_3 |z-y|}$ for some $C''<\infty$, $\mu_3>0$. Combined with (\ref{eq:CprimeBound}) this yields (\ref{eq:ccc}).


\begin{thebibliography}{99}
\bibitem{AR18} H.~Abdul-Rahman. \emph{Entanglement of a class of non-Gaussian states in disordered harmonic oscillator systems}. ArXiv: 1707.07063.

\bibitem{ANSS} H.~Abdul-Rahman, B.~Nachtergaele, R.~Sims and G.~Stolz. \emph{Localization properties of the disordered XY spin chain. A review of mathematical results with an eye toward many-body localization},  Ann. Phys. (Berlin) \textbf{529} (2017), 1600280


\bibitem{AizenmanWarzel} M.~Aizenman and S.~Warzel, \emph{Random operators. Disorder effects on quantum spectra and dynamics}, Graduate Studies in Mathematics, Vol. \textbf{168}, Am. Math. Soc., Providence (2015)


\bibitem{Agarwaletal} K.~Agarwal, E.~Altman, E.~Demler, S.~Gopalakrishnan, D.~A.~Huse and M.~Knap, \emph{Rare-region effects and dynamics near the many-body localization transition},  Ann. Phys. (Berlin)  \textbf{529} (2017), 1600326

\bibitem{Vosketal} E.~Altman and R.~Vosk, \emph{Universal dynamics and renormalization in many body localized systems}, Annual Review of Condensed Matter Physics \textbf{6} (2015), 383--409

\bibitem{BW} V.~Beaud and S.~Warzel, \emph{Low-energy Fock-space localization for attractive hard-core particles in disorder},  Ann. H. Poincar\'e \textbf{18} (2017), 3143--3146

\bibitem{CE} M.~Cramer and J.~Eisert, \emph{Correlations, spectral gap, and entanglement in harmonic quantum systems on generic lattices},
New J. Phys. \textbf{8} (2006), 71.1--71.24

\bibitem{ECP} J.~Eisert, M.~Cramer, and M. B.~Plenio, \emph{Area laws for the entanglement entropy},
Rev. Mod. Phys. \textbf{82} (2010), 277--306

\bibitem{CEPD} M.~Cramer, J.~Eisert, M.~Plenio, and J.~Dreissig,
\emph{An entanglement-area law for general bosonic harmonic lattice systems},
Phys.Rev. A \textbf{73} (2006), 012309


\bibitem{EKS} A.~Elgart, A.~Klein and G.~Stolz, \emph{Many-body localization in the droplet spectrum of the random XXZ quantum spin chain}, J.\ Funct.\ Anal., in press, online at https://doi.org/10.1016/j.jfa.2017.11.001

\bibitem{Imbrie} J.~Z.~Imbrie, \emph{On many-body localization for quantum spin chains}, J. Stat. Phys. \textbf{163} (2016), 998--1048

\bibitem{Mastro} V.~Mastropietro, \emph{Localization in interacting Fermionic chains with quasi-random disorder}, Comm. Math. Phys. \textbf{351} (2017), 283--309

\bibitem{NSS}  B.~Nachtergaele, R.~Sims and G.~Stolz, \emph{Quantum harmonic oscillator systems with disordered}, J. Stat. Phys. \textbf{149} (2012), 969--1012

\bibitem{NSS2}  B.~Nachtergaele, R.~Sims and G.~Stolz, \emph{An area law for the bipartite entanglement of disordered harmonic oscillator systems}, J. Math. Phys. \textbf{54} (2013), 042110

\bibitem{NH} R.~Nandkishore and D.~A.~Huse, \emph{Many body localization and thermalization in quantum statistical mechanics}, Annual Review of Condensed Matter Physics \textbf{6} (2015), 15--38

\bibitem{PAKSAM11} A.~Polkovnikov and K.~Sengupta and A.~Silva and M.~Vengalattore, \emph{Colloquium: Nonequilibrium dynamics of closed interacting quantum systems}, Rev. Mod. Phys.  \textbf{83} (2011), 863.

\bibitem{SCW} N.~Schuch, J.I.~Cirac, and M.~Wolf,
\emph{Quantum states on harmonic lattices},
Commun. Math. Phys. \textbf{267} (2006), 65--95

\bibitem{SW} R.~Seiringer and S.~Warzel, \emph{Decay of correlations and absence of superfluidity in the disordered Tonks-Girardeau gas}, New J. Physics \textbf{18} (2016), 035002

\bibitem{SPA14} M.~Serbyn and Z.~Papi\'{c} and D.~Abanin, \emph{Quantum quenches in the many-body localized phase}, Physical Review B \textbf{90} (2014), 174302





\end{thebibliography}
\end{document}